\newcommand{\rank}{{\tt rank}}
\newcommand{\len}{{\tt len}}
\newcommand{\HW}[1]{HW({\rm #1})}
\newcommand{\dleq}{\sqsubseteq}
\newcommand{\bt}{{\mathcal T}}
\newcommand{\D}{{{\mathcal D}}}
\newcommand{\tuple}[1]{\langle#1\rangle}
\newcommand{\lldots}{,\ldots ,}
\newcommand{\QED}{$\square$ \medskip}
\newcommand{\explst}{{\epsilon}}
\newcommand{\EXP}[2]{{{\tt{#1}}{\tt{exp}}({#2})}}
\newcommand{\complclass}[1]{\textsf{#1}}
\newcommand{\PTIME}{\complclass{P}}
\newcommand{\NP}{\complclass{NP}}
\newcommand{\PSPACE}{\complclass{PSpace}}
\newcommand{\EXPTIME}{\complclass{ExpTime}}
\newcommand{\NEXPTIME}{\complclass{NExpTime}}
\newcommand{\NmanyEXPTIME}[1]{\complclass{N#1}\complclass{ExpTime}}
\newcommand{\DEXPTIME}{\complclass{2ExpTime}}
\newcommand{\NDEXPTIME}{\complclass{N2ExpTime}}
\newcommand{\manyEXPTIME}[1]{\textsf{#1\EXPTIME}}
\begin{document}

\title{The Expressiveness of Looping Terms in the Semantic Programming \thanks{The authors were supported by the Russian Science Foundation (Grant No. 17-11-01176)}
}

\author{Sergey Goncharov\inst{1} \and Sergey Ospichev\inst{1} \and\newline Denis Ponomaryov\inst{1,2} \and  Dmitri Sviridenko\inst{1}}

\institute{Sobolev Institute of Mathematics \and Ershov Institute of Informatics Systems, \newline Novosibirsk State University}

\maketitle

\begin{abstract}
We consider the language of $\Delta_0$-formulas with list terms interpreted over hereditarily finite list superstructures. We study the complexity of reasoning in extensions of the language of $\Delta_0$-formulas with non-standard list terms, which represent bounded list search, bounded iteration, and bounded recursion. We prove a number of results on the complexity of model checking and satisfiability for these formulas. In particular, we show that the set of $\Delta_0$-formulas with bounded recursive terms true in a given list superstructure $HW(\mathcal{M})$ is non-elementary (it contains the class $\manyEXPTIME{k}$, for all $k\geqslant 1$). For $\Delta_0$-formulas with restrictions on the usage of iterative and recursive terms, we show lower complexity. 
\end{abstract}

\noindent{\bf Keywords:} Semantic Programming, list structures, bounded quanti\-fi\-cation,
reasoning complexity.

\section{Introduction}
In \cite{Sigma1,Sigma2} a paradigm of the Semantic Programming has been proposed, which paved the way for a new generation of declarative programming languages. The approach of the Semantic Programming allows to abstract away from details of implementation and to focus on the desired properties of a software system under development. It also includes imperative features, which allow to specify the order of computations, when it is necessary. Semantic Programming rests on the computability theory put in terms of $\Sigma$-definability in hereditarily finite list superstructures (see, e.g., \cite{AleksandrovaBazhenov} for recent results). The concept of \textit{list} appears to be general enough to represent common datatypes of programming languages. A program in the Semantic Programming is a formula over a signature, which includes basic list functions such as concatenation, adding an element to a list, taking head or tail of a list, as well as predicates, which can be used to refer to elements and initial segments of lists. Computation is implemented in two ways. The first one is based on testing whether a formula is true in an appropriate list superstructure and is conceptually close to the idea of Model Checking in the field of Software Verification. The second way is via deciding the entailment of a formula from an appropriate theory, which axiomatizes properties of list superstructures. The latter approach is close to the idea of Logic Programming and Deductive Verification. The language of the Semantic Programming is powerful enough to formulate statements about syntactic and semantic properties of programs, thus providing a unified framework for program specification, validation, and verification. The approach has numerous applications in model-driven software engineering and in particular, for the development of AI tools. One of the recent use cases is the application of the Semantic Programming to document processing in the area of Business Process Management \cite{MantsivodaPonomaryov}.

As a trade-off between the expressiveness and computational efficiency a number of sublanguages of the Semantic Programming have been proposed. Of the most important ones is the language of $\Delta_0$-formulas, in which only bounded quantification over elements and initial segments of lists can be used. In \cite{CondTerms}, this language has been extended with conditional list terms, which implement the ``if-then-else'' primitive of programming languages. It has been noted in \cite{OspichevPonomaryov} that there are at least two sources of the computational complexity of model checking for $\Delta_0$-formulas and their extensions. The first one is the complexity of computing list terms in a given list superstructure and the second one is the form of the quantifier prefix. It has been shown that for any polynomially computable structure, there exists a polynomially computable representation of its hereditarily finite list superstructure with the above mentioned basic list functions. Thus, the basic list terms make no contribution to the complexity of model checking (provided it is polynomial or superpolynomial). The same has been shown for conditional terms. The authors have noted the natural relationship with the truth problem for Quantified Boolean Formulas, which provided complexity bounds on the model checking for $\Delta_0$-formulas with conditional terms and general or restricted quantifier prefix.   

In \cite{Doklady,Recursive}, the language of $\Delta_0$-formulas has been further extended to address primitives, which implement looping in programming languages. The authors have introduced non-standard list terms, which correspond to bounded list search, bounded list recursion, and bounded iteration. In this paper, we study the complexity of the language of $\Delta_0$-formulas extended with these non-standard terms, with the goal to describe their contribution to the complexity of the basic reasoning tasks. Naturally, the first task we consider is model checking, i.e., given a formula $\varphi$ and a list superstructure $HW(\mathcal{M})$, decide whether $HW(\mathcal{M})\models\varphi$. The second one is satisfiability, i.e., for a formula $\varphi$ decide whether it is true in some list superstructure. The complexity of this  task is obviously important for program validation, since it indicates how complex it is to identify modelling errors, which might cause inconsistency of the constructed program. To make the contribution of the non-standard terms vivid, we omit two features of the language of $\Delta_0$-formulas, which on their own may cause an increased complexity of the reasoning tasks. First of all, we assume that the language contains constants, basic list functions, but \emph{no other functions}. Second, the range of any variable under a (bounded) quantifier must be given by a \emph{ground} list term. These restrictions are implicitly present in \cite{OspichevPonomaryov}. Our results evidence that the language with the non-standard list terms, which implement bounded recursion or iteration, is more expressive than the language of $\Delta_0$-formulas under the mentioned restrictions. In particular, the complexity of reasoning is increased, which is due to the ability of non-standard terms to succinctly represent lists of large size. 

\section{Preliminaries}
We assume the reader is familiar with basics of the complexity theory. We introduce below the complexity classes mentioned in this paper; all the necessary details can be found in \cite{AroraBarak,Papadimitriou}. 

For a finite alphabet $\Sigma$, let $\Sigma^\ast$ be the set of
all words over $\Sigma$ and for a subset $A\subseteq\Sigma^\ast$, let $f: A\rightarrow\Sigma^\ast$ be a function.
$f$ is said to be \PTIME-computable/\NP-computable if there is a
deterministic/non-deterministic Turing Machine (TM) $T$, respectively, and a polynomial
$p$ such that for any $x\in A$ the value of $f(x)$ can be computed by
$T$ in at most $p(|x|)$ steps, where $|x|$ is the length of the word
$x$. The function $f$ is \PSPACE-computable if for any $x\in
A$ the value of $f(x)$ can be computed by a deterministic TM $T$ using at most $p(|x|)$
cells of the tape of $T$, where $p$ is a polynomial.

For a given $n\geqslant 0$, let $\EXP{1}{n}$ be the notation for $2^n$ and for $k\geqslant 1$, let $\EXP{(k+1)}{n}=2^\EXP{k}{n}$.

For $k\geqslant 1$, the function $f$ is called $\manyEXPTIME{k}$/$\NmanyEXPTIME{k}$-computable, respectively, if for any word $x\in A$, the value of $f(x)$ can be computed by a deterministic/ non-deterministic TM, respectively, in at most $\EXP{k}{p(|x|)}$ steps, where $p$ is a polynomial. 

\smallskip

We slightly abuse the terminology and define $C\in \{\PTIME/\NP/\PSPACE/\manyEXPTIME{k}/$ $\NmanyEXPTIME{k}\}$ as the class of subsets $A\subseteq\Sigma^\ast$ such that their characteristic function is $C$-computable. A subset $A\subseteq\Sigma^\ast$ is called \textit{$C$-hard} if any set from $C$  is $m$-reducible to $A$ by a \PTIME-computable function. A subset $A$ is called \textit{$C$-complete} if it is $C$-hard and is contained in $C$.

A structure $\mathcal{M}$ is \PTIME-computable if so are the functions of $\mathcal{M}$, the characteristic functions of predicates, and the domain of $\mathcal{M}$.

\medskip

In this paper, we define the language of \textit{$\Delta_0$-formulas} as a first-order language with sorts ``urelement'' and ``list'' , in which only bounded quantification of the following form is allowed:
\begin{itemize}
\item a restriction onto the list elements $\forall x\in t$ and $\exists x\in t$;
\item a restriction onto the initial segments of lists $\forall x\sqsubseteq t$ and $\exists x\sqsubseteq t$.
\end{itemize}
where $t$ is a variable-free list term (i.e., it does not contain variables). A list term is defined inductively via constant lists, variables of sort ``list'', and list functions given below.
A constant \textit{list} (which can be nested) is built over constants of sort ``urelement'' (called ``urelements'', for short) and a constant $nil$ of sort ``list'', which represents the empty list. The \textit{list functions} are:
\begin{enumerate}
\item $head$ -- the last element of a non-empty list and $nil$, otherwise; 
\item $tail$ -- the list without the last element, for a non-empty list, and $nil$, otherwise; 
\item $cons$ -- the list obtained by adding a new last element to a list; 
\item $conc$ -- concatenation of two lists; 
\end{enumerate}

\smallskip

We assume that the language contains constants, list functions, but no other functions. The predicates $\in, \sqsubseteq$ are allowed to appear in $\Delta_0$-formulas without any restrictions, i.e., they can be used in bounded quantifiers and atomic formulas. 

\medskip

\noindent $\Delta_0$-formulas are interpreted over hereditarily finite \textit{list superstructures} $HW(\mathcal{M})$, where $\mathcal{M}$ is a structure. Urelements are interpreted as distinct elements of the domain of $\mathcal{M}$ and lists are interpreted as lists over urelements and the distinguished ``empty list'' $nil$. In particular, the following equations hold in every $HW(\mathcal{M})$ (the free variables below are assumed to be universally quantified):

\begin{equation*}
\begin{aligned}
& \neg \exists x \ x\in nil \\
& cons(x,y) = cons(x',y') \rightarrow x=x' \wedge y=y' \\
& tail(cons(x,y)) = x, \ \ head(cons(x,y)) = y \\
& tail(nil) = nil, \ \ head(nil) = nil \\
& conc(nil, x) = conc(x, nil) = x \\
& cons(conc(x,y), z) = conc(x, cons(y,z)) \\
& conc(conc(x,y), z) = conc(x, conc(y,z))
\end{aligned}
\end{equation*}

\medskip

It was shown in \cite{OspichevPonomaryov} that for any \PTIME-computable structure $\mathcal{M}$, there exists a \PTIME-computable representation of its superstructure of finite
lists $HW(\mathcal{M})$, in which the value of any list term is given by a  \PTIME-computable function. In this paper, we omit subtleties related to the representation of models and we simply assume that any $HW(\mathcal{M})$ mentioned in the paper is \PTIME-computable and so is the value of any (standard) list term in $HW(\mathcal{M})$. Since we assumed that no functions except constants and list functions are in the language of $\Delta_0$-formulas, it follows that the set of variable-free $\Delta_0$-formulas true in a given structure $HW(\mathcal{M})$ is \PTIME-computable. In turn, the set of variable-free $\Delta_0$-formulas satisfiable in some structure $HW(\mathcal{M})$ is $\NP$-complete, which is due to the correspondence with the satisfiability problem for propositional boolean formulas. In particular, the upper bound is shown  as follows. Given a variable-free $\Delta_0$-formula $\varphi$, every list term in $\varphi$ is replaced with its value, a constant list; by Lemma 2 in \cite{OspichevPonomaryov}, this transformation can be done by a  \PTIME-computable function. Next, every ground atom $s \propto t$, where $s,t$ are constant lists and $\propto\in\{\in,\sqsubseteq\}$, is evaluated as $true$/$false$ and every equality $s=t$ is replaced with $true$ or $false$ if $s$, $t$ are equal or non-equal, respectively. The resulting formula can again be obtained as a value of a \PTIME-computable function. Finally, every ground atom $P(s_1\lldots s_k)$ is replaced with a boolean variable $x_{P(s_1\lldots s_k)}$, which gives a propositional boolean formula, which is satisfiable iff so is the initial $\Delta_0$-formula $\varphi$.

\section{Looping Terms}
We consider extensions of the language of $\Delta_0$-formulas with \textit{bounded search terms} (or \textit{b-search terms}, for short), \textit{recursive terms}, and \textit{iterative terms} of sort ``list''. The corresponding language extensions are denoted as $\Delta_0$+bSearch, $\Delta_0$+Rec, $\Delta_0$+ Iteration. By default, any formula or a list term in the language of $\Delta_0$-formulas is a formula/list term in these language extensions. \textit{Non-standard list terms} are defined as follows.

If $t(\overline{v})$ and $\theta(\overline{v}, x)$ is a $\Delta_0$+bSearch list term and formula, respectively, then the expression $bSearch_{\in}(\theta, t)(\overline{v})$ and $bSearch_{\sqsubseteq}(\theta, t)(\overline{v})$ is a \textit{b-search term}. It is equal to the first element/initial segment $a$ of $t(\overline{v})$, respectively, such that $\theta(\overline{v}, a)$ holds and it is equal to $t(\overline{v})$, otherwise (i.e., if there is no such $a$). 

If $f(\overline{v}), h(\overline{v},y)$ are $\Delta_0$+Iteration list terms and $i$ is a natural number given in either unary or binary representation, then the expression $<i>Iter[f,h](\overline{v})$ is an \textit{iterative term} and its value is given by $g^i(\overline{v})$ with the following definition:
\begin{itemize}
	\item $g^0(\overline{v})=f(\overline{v})$
	\item $g^{j+1}(\overline{v})=h(\overline{v}, g^j(\overline{v}))$
\end{itemize}

If $f(\overline{v}), h(\overline{v},y,z)$, and $t(\overline{v})$ are $\Delta_0$+Rec list terms then the expression\newline $Rec[f,h,t](\overline{v})$ is a \textit{recursive term} and its value is given by $g(\overline{v},t)$ with the following definition:
\begin{itemize}
	\item $g(\overline{v},nil)=f(\overline{v})$
	\item $g(\overline{v},cons(\alpha,b))=h(\overline{v},g(\overline{v},\alpha),b)$, for any lists $\alpha,b$ such that $cons(\alpha,b)\sqsubseteq t$.
\end{itemize}


	
Let $s$ be a $\Delta_0$+Rec term. $s$ is called \textit{explicit} if in every term $Rec[f,h,t](\overline{v})$, which occurs in $s$, the term $t$ is variable-free. The term $s$ is called \textit{flat} if in every term $Rec[f,h,t](\overline{v})$, which occurs in $s$, $f$ and $h$ are $\Delta_0$-terms. A $\Delta_0$+Rec formula is \textit{flat} if so is every term in it. The notion of flat $\Delta_0$+Iteration term or formula is defined identically. 

Intuitively, the terms, which are not flat, may implement nested looping, which is an additional source of computational complexity. In the paper, we will show however that the complexity of reasoning is increased if $\Delta_0$-formulas are extended only with flat iterative and recursive terms.

Terms $s(\overline{v})$ and $t(\overline{v})$ are called \textit{equivalent} if $HW(\mathcal{M})\models s(\overline{a}) = t(\overline{a})$, for any structure $HW(\mathcal{M})$ and any substitution for $\overline{v}$ with a vector of values $\overline{a}$. 

The \textit{rank} of a $\Delta_0$+Rec term $s$ (notation: $\rank(s)$) is defined as follows. If $s$ is a $\Delta_0$-term then $\rank(s)=0$. If $s=Rec[f,h,t](\overline{v})$ then $\rank(s)$ is the maximum rank of the terms $f,h,t$ increased by $1$. If $s$ is not a recursive term then $\rank(s)$ if the maximum rank of the recursive terms in $s$. The rank of other non-standard terms is defined similarly wrt the maximum rank of the list terms in their parameters. The rank of a formula $\varphi$ (notation: $\rank(\varphi)$) equals to the maximum rank of the  terms in $\varphi$.  

For a list $s$, the \textit{length} of $s$, denoted as $\len(s)$, is the number of elements in $s$, i.e., for $s=\langle t_1\lldots t_n\rangle$ (where every $t_i$, $i=1\lldots n$, is a urelement or a list), we have $\len(s)=n$. 
For a urelement or a list term $t$, the \textit{size} of $t$ (denoted by $|t|$) is the length of the string, which represents $t$. The \textit{size of a formula} (we use the same notation $|\varphi|$) is defined identically. 

	
\section{Expressiveness of Formulas with Looping Terms}

We begin with an observation that bounded search terms add no expressiveness to $\Delta_0$-formulas in terms of the computational complexity of model checking.

\begin{theorem}[Complexity of Model Checking for $\Delta_0$+bSearch Formulas]
The set of $\Delta_0$+bSearch formulas true in a given structure $HW(\mathcal{M})$ is $\PSPACE$-complete.
\end{theorem}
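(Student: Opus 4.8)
The plan is to prove the two bounds separately: membership in $\PSPACE$ by a recursive, space-bounded evaluation procedure, and $\PSPACE$-hardness by a reduction from the truth problem for Quantified Boolean Formulas (TQBF).

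For the upper bound, the first thing I would establish is that no term encountered during evaluation can denote a list of super-polynomial size. Since every bounded quantifier ranges over a variable-free term and, by the result cited from \cite{OspichevPonomaryov}, the value of a standard ground term is \PTIME-computable (hence of polynomial size), each quantified variable gets bound to a value of polynomial size. A b-search term $bSearch_{\in}(\theta,t)(\overline{v})$ or $bSearch_{\sqsubseteq}(\theta,t)(\overline{v})$ always returns either an element/initial segment of $t(\overline{v})$ or $t(\overline{v})$ itself, so it is size-non-increasing. Because a term is written out as a tree rather than as a shared DAG, a term of size $m$ has at most $m$ variable occurrences, and the operations $cons$, $conc$, $head$, $tail$ cannot multiply lengths beyond the sum of the lengths of the substituted values; hence every subterm evaluates, under a polynomial-size assignment, to a list of polynomial size. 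This rules out the blow-up that would otherwise threaten a space bound.

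Given this, I would describe a deterministic procedure that decides $HW(\mathcal{M})\models\varphi$ under an assignment by structural recursion: boolean connectives are handled directly; a bounded quantifier $\forall x\in t$ (or $\forall x\sqsubseteq t$) evaluates the ground term $t$ to a polynomial-size list and iterates sequentially over its polynomially many elements or initial segments, recursing on the matrix for each; an atomic formula is decided by first evaluating its terms and then appealing to the \PTIME-computability of $HW(\mathcal{M})$. The only genuinely new ingredient is the evaluation of a b-search term, which invokes the formula-evaluation procedure on $\theta$ for successive elements/segments of $t(\overline{v})$ until one satisfies $\theta$, reusing space across iterations. Each stack frame stores only the current assignment (at most $|\varphi|$ values, each polynomial), a loop counter, and intermediate term values, and since every recursive call descends into a strict syntactic subexpression, the recursion depth is bounded by the syntactic depth of $\varphi$. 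The step I expect to be the main obstacle is exactly this space analysis: because $bSearch$ places a full formula $\theta$ inside a term, the procedure mutually recurses between formula evaluation and term evaluation, and one must argue that this nesting does not escape polynomial space — which is controlled precisely by the two observations above, namely that b-search is size-non-increasing (so no exponentially long list is ever materialised) and that each nested call moves strictly into a smaller subexpression (so the depth stays linear in $|\varphi|$). Together these give membership in $\PSPACE$.

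For hardness it suffices to reduce TQBF already to plain $\Delta_0$ model checking, since $\Delta_0$-formulas form a fragment of $\Delta_0$+bSearch. I would fix a structure $HW(\mathcal{M})$ with two distinguished urelements playing the roles of the truth values $0$ and $1$, and take the ground term $c=\langle 0,1\rangle$. A closed formula $Q_1x_1\cdots Q_nx_n\,\psi$ is translated by replacing each quantifier $Q_ix_i$ with the bounded quantifier $Q_i x_i\in c$, each propositional variable occurrence with the atom $x_i=1$, and the boolean connectives of $\psi$ with the corresponding logical connectives. The translation is clearly \PTIME-computable, and by induction on the quantifier prefix the resulting $\Delta_0$-formula is true in $HW(\mathcal{M})$ exactly when the original QBF is true, so the problem is $\PSPACE$-hard.
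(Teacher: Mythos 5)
Your proposal is correct and follows essentially the same route as the paper: both rest on the observation that a b-search term evaluates to an element or initial segment of its (polynomially small) bound, so the mutual recursion between term evaluation and formula evaluation stays within polynomial space, and both inherit hardness from plain $\Delta_0$ model checking. The only cosmetic difference is that you spell out the TQBF reduction explicitly, whereas the paper simply cites the known $\PSPACE$-completeness of $\Delta_0$ model checking from \cite{OspichevPonomaryov}.
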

\begin{proof}
Hardness follows from Theorem 3 in \cite{OspichevPonomaryov}, where it is proved that the set of $\Delta_0$-formulas true in a given structure $HW(\mathcal{M})$ is $\PSPACE$-complete. The upper complexity bound is shown as follows. 

Let $\varphi$ be a $\Delta_0$+bSearch formula and $HW(\mathcal{M})$ a structure. First, we consider the case when every b-search term in $\varphi$ is variable-free and use induction on the rank of $\varphi$ to prove the claim of the theorem. We simultaneously show by induction that there is a $\PSPACE$-computable function, which for a variable-free b-search term $s$ computes the value of $s$ as a list $\underline{s}$ in $HW(\mathcal{M})$ such that the size of $\underline{s}$ is bounded by $|s|$.

For $\rank(\varphi)=\rank(s)=0$ the claims above readily follow from Theorem 3 and Lemma 2 in \cite{OspichevPonomaryov}. For $k=\rank(\varphi)\geqslant 1$, take an arbitrary (variable-free) term $s=bSearch_{\propto}[\theta, t]$, $\propto\in\{\in, \sqsubseteq\}$ of rank $k$ in $\varphi$. Then $t$ is variable-free, $\theta$ has a single free variable, and the ranks of $t$ and $\theta$ are less than $k$. Then by the induction assumption $\underline{t}$ is given by a $\PSPACE$-computable function and the size of $\underline{t}$ is bounded by $|t|$. By applying the induction assumption again, we conclude that there is a $\PSPACE$-computable function, which gives the first element/initial segment $a$ of $\underline{t}$, for which $HW(\mathcal{M})\models\theta(a)$. Clearly, the size of $a$ is bounded by $|s|$.

Now let $\varphi'$ be a $\Delta_0$-formula obtained from $\varphi$ by replacing every b-search term $s$ with $\underline{s}$. By the observation above,  $\varphi'$ can be obtained by a \PSPACE-computable function, it has size bounded by $|\varphi|$, and it holds $HW(\mathcal{M})\models\varphi$ iff $HW(\mathcal{M})\models\varphi'$. Since $\varphi'$ is a $\Delta_0$-formula, we conclude that the claim of the theorem holds for formulas, in which non-standard terms are variable-free.  

For the general case, note that if there is a quantifier $\Game x \propto t $, with $\Game\in\{\exists, \forall\}$, $\propto\in\{\in, \sqsubseteq\}$, in a $\Delta_0$+bSearch formula $\varphi$, then by the definition of $\Delta_0$-formulas, the list term $t$ is variable-free. Let $\varphi'$ be a formula obtained from $\varphi$ by replacing every quantifier of the form $\Game x \propto t $ with $\Game x \propto \underline{t}$, for $\propto\in\{\in, \sqsubseteq\}$. By Lemma 2 in \cite{OspichevPonomaryov} and the above shown, $\varphi'$ can be obtained by a \PSPACE-computable function, it has size bounded by $|\varphi|$, and it holds $HW(\mathcal{M})\models\varphi$ iff $HW(\mathcal{M})\models\varphi'$. Then $HW(\mathcal{M})\models\varphi'$ can be decided by a \PSPACE-computable function. It is given by the standard procedure of bounded quantifier elimination, which stores the selected value for each quantified variable. After all quantifiers are eliminated, a formula $\psi$ from $\varphi'$ is obtained, in which every variable is substituted with the corresponding selected value. The formula $\psi$ is variable-free, thus, by the above shown, $HW(\mathcal{M})\models\psi$ can be decided by a \PSPACE-computable function, from which the claim of the theorem follows. \QED
\end{proof}

In the rest of the paper we focus on the expressiveness of recursive and iterative terms and provide the corresponding complexity results. 

\medskip

Let $L$ be an extension of the language of $\Delta_0$-formulas. 



For $k\geqslant 0$, we say that a \emph{$k$-list is expressible in $L$} if there exists a variable-free $L$-term $t$ such that for any structure $HW(\mathcal{M})$, the interpretation of $t$ in $HW(\mathcal{M})$ is a list of length $k$.

Let $\times$ be a map, which for non-empty lists $s_1, s_2$ gives a list $\times[s_1, s_2]$, which consists of $conc(\alpha_1,\alpha_2)$, for all $\alpha_i\in s_i$, $i=1,2$. Now let $\circ^k$ be a map defined as follows: for a non-empty list $s$, it holds $s^1=s$ and for $k\geqslant 2$, we have  $s^k=\times[s^{k-1}, s]$.

We say that \emph{$\times$ is expressible in $L$} if there is a $L$-term $t(x_1, x_2)$ such that in any structure $HW(\mathcal{M})$ for any non-empty lists $s_1, s_2$, the term $t(s_1, s_2)$ is interpreted as $\times[s_1, s_2]$. Similarly, for $k\geqslant 1$, $\circ^k$ is said to be expressible in $L$ if there is a $L$-term $t(x)$ such that in any structure $HW(\mathcal{M})$ for any non-empty list $s$, the term $t(s)$ is interpreted as $s^k$. Whenever we want to specify the $L$-term $t$, we say that \textit{$t$ represents} $\times$ ($\cdot^k$, respectively), or $\times$ ($\cdot^k$, respectively) \textit{is expressible by $t$}. We omit a direct reference to the language $L$, whenever it is clear from the context.

\smallskip

\begin{lemma}[Succinctness of Recursive Terms]\label{Lem:SuccinctnessRec}
For $k\geqslant 1$, $n\geqslant 0$, a $\EXP{k}{n}$-list, $\times$, and $\circ^{\EXP{k}{n}}$ is expressible by a recursive term of size linear in $k$ and $n$.
\end{lemma}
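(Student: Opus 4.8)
The plan is to prove all three claims at once by induction on $k$, taking $\times$ as the basic building block and exploiting the fact that a single recursion over a list of length $m$ performs exactly $m$-fold iteration, and hence can realize $\circ^m$. First I would express $\times$ by a fixed, constant-size recursive term (independent of $k,n$), for lists $s_1,s_2$ whose elements are themselves lists so that every $conc$ is well-typed. I would use a nested recursion: an inner term given by $inner(nil,\beta)=nil$ and $inner(cons(\gamma,\delta),\beta)=cons(inner(\gamma,\beta),conc(\delta,\beta))$, which produces the block $\langle conc(\delta,\beta):\delta\in s_1\rangle$; and an outer recursion on $s_2$ given by $outer(nil)=nil$ and $outer(cons(\epsilon,\zeta))=conc(outer(\epsilon),inner(s_1,\zeta))$, which concatenates these blocks and yields $\times[s_1,s_2]$. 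Both bodies have constant size, all elements remain of sort ``list'', and $\len(\times[s_1,s_2])=\len(s_1)\cdot\len(s_2)$.

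Next I would isolate the key engine. Given a variable-free term $L$ of size $\sigma$ that denotes a list of length $m\geqslant 1$ in every $HW(\mathcal{M})$, the recursive term $Rec[x,\ \times[y,x],\ tail(L)](x)$ (where $y$ is the recursion value and the last-element argument is ignored) denotes $\circ^m$: the recursion depth equals $\len(tail(L))=m-1$, and each step sends $s^{j}$ to $\times[s^{j},s]=s^{j+1}$ from the base $s^1=s$, so the value is $s^m$; its size is $\sigma+O(1)$. Instantiating the argument with a fixed two-element constant list $s_0$ with list elements (say $\langle nil,\langle nil\rangle\rangle$) gives the closed term $Rec[s_0,\ \times[y,s_0],\ tail(L)]$, which denotes $s_0^{\,m}$, a list of length $2^m$, again of size $\sigma+O(1)$. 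Thus one use of the engine turns a term for an $m$-list into a term for a $2^m$-list at the cost of only a constant increase in size.

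I would then run the induction. For the base, the constant list $\langle nil\lldots nil\rangle$ of length $n$ has size $O(n)$, and one application of the engine yields a $2^n=\EXP{1}{n}$-list of size $O(n)$. For the step, applying the engine to an $\EXP{k}{n}$-list (of length $\EXP{k}{n}$) raises $s_0$ to the power $\EXP{k}{n}$ and produces a list of length $2^{\EXP{k}{n}}=\EXP{(k+1)}{n}$, adding only $O(1)$ to the size. Hence after $k$ applications the $\EXP{k}{n}$-list is expressible by a recursive term of size $O(n+k)$, and $\circ^{\EXP{k}{n}}$ is obtained by one further instance of the engine, recursing over the $\EXP{k}{n}$-list itself, again of size $O(n+k)$; $\times$ is already constant-size. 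All three terms are variable-free, and their lengths are forced in every $HW(\mathcal{M})$ by the equations governing $nil,cons,conc,tail$, so the construction is structure-independent, as required.

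The main obstacle I expect is twofold. First, getting the nested recursive term for $\times$ exactly right: keeping every list element of sort ``list'' so that each $conc$ is well-typed, and matching the prescribed signature $h(\overline{v},y,z)$ of $Rec[f,h,t]$. Second, the size accounting, namely verifying that each exponentiation level contributes only a constant, so that the final size is $O(n+k)$ (linear) rather than polynomial in $k$; this is where the separation between the $O(n)$ base cost and the $O(1)$ per-level cost must be made precise. The off-by-one between recursion depth and exponent, absorbed by the single $tail$, is only a minor bookkeeping point.
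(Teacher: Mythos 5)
Your proof is correct, and for $\times$ and $\circ^{\EXP{k}{n}}$ it is essentially the paper's own construction: a nested recursion for the Cartesian concatenation (your $inner$/$outer$ pair is the paper's $multiply\_element$/$multiply$, and your use of $cons$ in the inner step is actually cleaner than the paper's $conc$ there), and a recursion over $tail$ of the long list that applies $\times[\,\cdot\,,x]$ at each step. Where you genuinely diverge is the $\EXP{k}{n}$-list itself: you bootstrap it by repeatedly applying the $\times$-engine to a fixed two-element list, whereas the paper uses the direct doubling body $conc(g(\alpha),g(\alpha))$ at every level, so that $\explst_k$ is a \emph{flat} term containing exactly $k$ recursive subterms. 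Both yield size $O(n+k)$ and both prove the lemma as stated, but the paper's version buys something your version does not: the hardness argument in Theorem~\ref{Thm:SatisfiabilityRestrictedRec} cites this lemma for a $\EXP{k}{n}$-list expressible by a \emph{flat} term with exactly $k$ recursive terms, and your engine-based list term is neither flat (its step term contains nested recursions) nor bounded by $k$ recursive occurrences. This does not affect the lemma itself, and your size accounting (an $O(n)$ base plus $O(1)$ per exponentiation level) and the $tail$-absorbed off-by-one are handled correctly, matching the paper's own bookkeeping.
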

\begin{proof}
Let $s_0$ denote the list $\langle nil \rangle$ (i.e., the list, which consists of the single element being the empty list) and for all $n\geqslant 0$, let $s_{n+1}=cons(s_n, nil)$. Given $n\geqslant 0$, we define by induction on $k\geqslant 1$ a variable-free recursive term $\explst_{k}$ as follows. 

For $k=1$, we let $\explst_{1}$ be the term $$Rec[\langle nil \rangle, conc(g(\alpha),g(\alpha)) , s_n].$$ For $k\geqslant 2$, we define $$\explst_{k}=Rec[\langle nil \rangle, conc(g(\alpha),g(\alpha)) , \explst_{k-1}].$$ It easy to verify by induction that the interpretation of $\explst_{k}$ in any structure $\HW{\mathcal{M}}$ is a list, which consists of $\EXP{k}{n}$-many elements (being empty lists). Clearly, the size of $\explst_{k}$ is linear in $k$ and $n$.

\smallskip

Now consider a recursive term, which for any lists $x,y$ gives a list consisting of $conc(x, b)$, for all $b\in y$. It is defined as $$Rec[nil,\  conc(g(\alpha), conc(x, b)), \ y](x, y)$$ We denote this term by $multiply\_element(x,y)$. Now a term which represents $\times$ is defined as $$Rec[nil, \ conc(g(\alpha), \ multiply\_element(b, x_2))), \ x_1](x_1, x_2)$$ Denote it by $multiply(x_1, x_2)$. 

Finally, a recursive term which represents $\circ^{\EXP{k}{n}}$ is given by $$Rec[x, \  multiply(g(\alpha), x ), \ tail(\explst_{k}) ](x)$$ and it is of size linear in $k$ and $n$. \QED
\end{proof}

\begin{lemma}[Succinctness of Iterative terms]\label{Lem:SuccinctnessIteration}
For any $n\geqslant 0$, a $\EXP{1}{n}$-list and $\EXP{2}{n}$-list is expressible by an iterative term of size linear in $n$, in which the number of iterations is given in the unary and binary representation, respectively.
\end{lemma}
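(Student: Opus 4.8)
The plan is to build these two lists directly from the iterative term construction $\langle i\rangle Iter[f,h](\overline{v})$, exploiting the fact that the iteration count $i$ is supplied explicitly as part of the term and can therefore be written in either unary or binary. The key observation is that the doubling operation used in Lemma~\ref{Lem:SuccinctnessRec} — replacing a list $z$ with $conc(z,z)$ — is a single $\Delta_0$-term in one free variable, so it can serve as the iteration body $h$.

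For the $\EXP{1}{n}$-list (i.e.\ a list of length $2^n$), I would take $f(\overline{v})$ to be the constant term $\langle nil\rangle$ (a list of length $1$) and $h(\overline{v},z)$ to be $conc(z,z)$, and form the term $\langle n\rangle Iter[\langle nil\rangle,\ conc(z,z)]$. Unwinding the definition, $g^0 = \langle nil\rangle$ has length $1$, and each application of $h$ doubles the length, so $g^n$ has length $2^n = \EXP{1}{n}$, as required. Here the iteration count is the number $n$ itself; written in unary it has size linear in $n$, and since $f$ and $h$ are fixed-size $\Delta_0$-terms, the whole term has size linear in $n$. This gives the unary half of the claim.

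For the $\EXP{2}{n}$-list (length $2^{2^n}$), the same doubling body $h(z)=conc(z,z)$ and base $f=\langle nil\rangle$ suffice, but now I need to iterate $2^n$ times, since $g^{2^n}$ has length $2^{2^n}=\EXP{2}{n}$. The point is that the iteration count may be given in binary: the binary representation of $2^n$ is a single $1$ followed by $n$ zeros, a string of size $n+1$, hence linear in $n$. Thus the term $\langle 2^n\rangle Iter[\langle nil\rangle,\ conc(z,z)]$, with the count written in binary, again has size linear in $n$ and evaluates to a list of length $\EXP{2}{n}$.

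The one point requiring care — the main obstacle, such as it is — is the accounting of term size versus the representation of the iteration count, which is exactly why the statement distinguishes the unary and binary cases. Iterating $2^n$ times would be impossible to write down with a unary count in size linear in $n$, so the binary representation is essential for the second list; conversely, the first list is stated for the unary count to emphasize that even under that weaker encoding a single-exponential list is expressible. I would make explicit that in both cases the body and base are flat $\Delta_0$-terms, so these iterative terms are flat, and I would note that by an immediate induction the length of $g^j$ is $2^j$, which is the only nonroutine verification involved. \QED
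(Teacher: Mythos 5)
Your proof is correct and takes essentially the same approach as the paper: the paper also uses the base $\langle nil\rangle$ with the doubling body $conc(g(\alpha),g(\alpha))$ and obtains the $\EXP{1}{n}$- versus $\EXP{2}{n}$-list by reading the iteration count in unary versus binary. Your explicit choice of $2^n$ written in binary (a string of length $n+1$) as the iteration count in the second case merely spells out what the paper's terser phrasing leaves implicit.
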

\begin{proof}
The proof is identical to Lemma \ref{Lem:SuccinctnessRec}, the variable-free term $$<n>Iter[\langle nil \rangle, conc(g(\alpha),g(\alpha))]$$ is the required one. It represents a $\EXP{1}{n}$-/$\EXP{2}{n}$-list, respectively, if $n$ is given in unary or binary (since the binary representation is exponentially more succinct than the unary one) and its size is linear in $n$. \QED
\end{proof}

\begin{lemma}[Unfolding Lemma]\label{Lem:Unfolding}
For any flat $\Delta_0$+Iteration term $t(\overline{v})$, there is an equivalent $\Delta_0$-term $t_0(\overline{v})$ such that $|t_0|\leqslant \EXP{1}{p(|t|)}$ or $|t_0|\leqslant \EXP{2}{p(|t|)}$, for a polynomial $p$, if the number of iterations is given in unary or binary, respectively. 

For any explicit flat $\Delta_0$+Rec term $t(\overline{v})$ of rank bounded by $k\geqslant 1$, there is an equivalent $\Delta_0$-term $t_0(\overline{v})$, with $|t_0| \leqslant \EXP{k}{p(|t|)}$, for a polynomial $p$.
\end{lemma}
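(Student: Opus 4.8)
The plan is to prove both claims by induction, showing that unfolding a looping term according to its definition produces an equivalent $\Delta_0$-term whose size is controlled by the stated tower of exponentials. The key observation is that flatness guarantees the bodies $f$ and $h$ are genuine $\Delta_0$-terms (rank $0$), so unfolding a single loop does not re-introduce nested looping; the only growth comes from syntactically copying a bounded-size body once per iteration or per recursive step.

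For the iterative case, consider $t = {<}i{>}Iter[f,h](\overline{v})$ with $f,h$ flat (hence $\Delta_0$-terms). By definition $g^0 = f$ and $g^{j+1} = h(\overline{v}, g^j)$, so the fully unfolded term $t_0$ is obtained by substituting $g^j$ into the single free variable $y$ of $h$, iterated $i$ times. First I would bound $i$ by the numeral appearing in $t$: if the count is in unary, $i \leqslant |t|$, whereas in binary $i \leqslant 2^{|t|}$. Each unfolding step wraps the current term in one fresh copy of $h$, so $|g^{j+1}| \leqslant |g^j| + |h| \leqslant |g^j| + |t|$, giving $|t_0| \leqslant |f| + i\cdot|h| \leqslant |t| + i\cdot|t|$, i.e. linear in $i$. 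Combined with the bound on $i$ this yields $|t_0| \leqslant \EXP{1}{p(|t|)}$ in the unary case and $|t_0| \leqslant \EXP{2}{p(|t|)}$ in the binary case, for a suitable polynomial $p$. If $t$ contains several iterative subterms, I would unfold them innermost-first; since each is flat, unfolding one never creates a new loop inside another, and the size bounds compose into the stated single-exponential (resp. double-exponential) form.

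For the recursive case, consider an explicit flat term $t = Rec[f,h,r](\overline{v})$ of rank $\leqslant k$. Explicitness means $r$ is variable-free, so its value is a concrete list of some length $m$; the recursion $g(\overline{v},nil)=f$, $g(\overline{v},cons(\alpha,b))=h(\overline{v},g(\overline{v},\alpha),b)$ terminates after exactly $m$ steps along the spine of $r$. Unfolding produces $m$ nested copies of the flat body $h$ together with one copy of $f$, with the appropriate list elements $b$ substituted, so as in the iterative case $|t_0|$ is linear in $m$ and in $|t|$. The crux, and the part I expect to be the main obstacle, is bounding $m=\len(\underline{r})$ in terms of $|t|$ and the rank $k$: a variable-free recursive term can itself expand to an enormous list, so I would prove by induction on rank that the length of the value of any variable-free explicit flat $\Delta_0$+Rec term of rank $\leqslant k$ is at most $\EXP{k}{p(|t|)}$. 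The base case $k=0$ is Lemma 2 of \cite{OspichevPonomaryov} (a $\Delta_0$-term has polynomial-size value). For the inductive step, a rank-$k$ recursion runs for $m' \leqslant \EXP{k-1}{p}$ steps by the hypothesis applied to $r$, and each step applies the flat body $h$, which can at worst concatenate, doubling the length; iterating a length-doubling operation $\EXP{k-1}{p}$ times yields length $\EXP{k}{p}$, exactly matching Lemma~\ref{Lem:SuccinctnessRec}.

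Putting these together, the recursive unfolding bound follows by a simultaneous induction on rank: at each level the number of unfolding steps is governed by the length of the (one-lower-rank) bound term $r$, which is at most $\EXP{k-1}{p(|t|)}$, and the unfolded term is linear in that number of steps, so $|t_0| \leqslant \EXP{k}{p(|t|)}$ after adjusting the polynomial $p$. I would remark that flatness is essential throughout: without it the bodies $f,h$ could contain further looping terms, and unfolding would multiply tower heights rather than merely stacking one exponential per rank level, breaking the clean $\EXP{k}{\cdot}$ bound.
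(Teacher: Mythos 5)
The central quantitative step in your argument is wrong. You assume that each unfolding step ``wraps the current term in one fresh copy of $h$'', giving $|g^{j+1}|\leqslant |g^j|+|h|$ and hence an unfolded term that is \emph{linear} in the number of iterations. But $h(\overline{v},y)$ may contain several occurrences of the iterated variable $y$ --- the paper's own succinctness construction uses the body $conc(g(\alpha),g(\alpha))$, which duplicates the previous stage at every step. The correct recurrence is multiplicative: the number of occurrences of $g^{j-1}$ in $g^{j}$ is bounded by $|h|$, so $|g^i|\leqslant |f|\cdot|h|^i$, which is how the paper argues. With $i\leqslant|t|$ (unary) this gives $|t|^{O(|t|)}\leqslant\EXP{1}{p(|t|)}$, and with $i\leqslant 2^{|t|}$ (binary) it gives $|t|^{\EXP{1}{|t|}}\leqslant\EXP{2}{p(|t|)}$ --- exactly the bounds of the lemma, and essentially tight. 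Your additive bound, if it were true, would make the unfolded term polynomial-size in the unary case; that would contradict Lemma~\ref{Lem:SuccinctnessIteration} (a flat unary iterative term of size $O(n)$ denotes a $\EXP{1}{n}$-element list, which no polynomial-size variable-free $\Delta_0$-term can denote, by Lemma~2 of \cite{OspichevPonomaryov}) and would collapse the $\NEXPTIME$-hardness in Theorem~\ref{Thm:SatisfiabilityIteration}. The same error recurs in your recursive case, where you claim the unfolded term is ``linear in $m$'': in fact it is $|f|\cdot|h|^{m}$, and it is precisely this exponential in $m=\len(l_0)\leqslant\EXP{(k-1)}{p(|t|)}$ that produces the $\EXP{k}{p(|t|)}$ bound. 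Your route would only ever reach size $\EXP{(k-1)}{p'(|t|)}$ for the unfolded term, so your induction on rank would not reproduce the growth that the succinctness lemma exhibits.

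Apart from this, your overall architecture matches the paper's: innermost-first elimination of loop subterms, induction on the iteration count, and for explicit recursive terms an induction on rank in which the bound term $l$ is first replaced by an inductively obtained constant list $l_0$ whose size controls the number of recursive steps. It is worth noting that you \emph{do} account for duplication when bounding the length of the computed value (your remark that $h$ ``can at worst concatenate, doubling the length''), but not when bounding the size of the unfolded syntactic term; both quantities are driven by the same phenomenon of $y$ occurring more than once in $h$, and the fix is to replace your additive recurrences by the multiplicative ones above.
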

\begin{proof}
Let $t$ be a flat $\Delta_0$+Iteration term. If $t$ is a $\Delta_0$-term, there is nothing to prove, therefore, we assume there is an iterative term $s=<i>Iter[f,h](\overline{v})$ in $t$. We use induction on $i$ to show that the size of a $\Delta_0$-term equivalent to $s$ is bounded by $| f | \cdot | h |^i $. The case $i=0$ is trivial. For $i\geqslant 1$, consider the term $g^i(\overline{v})$ in the definition of $s$. It is given as a combination of a definition for $g^{i-1}(\overline{v})$ with list functions, where $g^{i-1}(\overline{v})$ is equivalent to $<i-1>Iter[f,h](\overline{v})$. The number of occurrences of $g^{i-1}(\overline{v})$ in $g^{i}(\overline{v})$ is at most $|h|$, thus, by applying the induction assumption, the size of a $\Delta_0$-term equivalent to $g^{i}(\overline{v})$ is bounded by $| h |\cdot | f | \cdot | h |^{i-1}$. It follows that the size of the $\Delta_0$-term equivalent to $s$ is bounded by $|s|^{p(|s|)}$ or $|s|^{\EXP{1}{p(|s|)}}$, respectively, for a polynomial $p$, if the number $i$ is given in unary/binary. Hence, it is bounded by $\EXP{1}{p(|s|)}$ or $\EXP{2}{p(|s|)}$, respectively, for an appropriate polynomial $p$. Since the number of iterative terms in $t$ is bounded by $t$ and the choice of $s$ was arbitrary, we conclude that the claim of the lemma holds for $t$. 

Now let $t$ be an explicit flat $\Delta_0+Rec$ term. First, we show that in case $t=Rec[f,h,l_0](\overline{v})$, where $l_0$ is a constant list, there is a $\Delta_0$-term equivalent to $t$ of size bounded by $\EXP{1}{p(|t|)}$, for a polynomial $p$. 

Consider the terms $g(\overline{v},\alpha)$ in the definition of $t$ and denote $t_\alpha=Rec[f,h,\alpha](\overline{v})$, for $\alpha=nil$ or $\alpha\dleq l_0$ We use induction on the length of $l_0$ to show that for any list $\alpha$ such that $\alpha=nil$ or $\alpha\dleq l_0$, there is a $\Delta_0$-term $t^0_\alpha$ equivalent to $t_\alpha$ such that $|t^0_\alpha|$ is bounded by $|f|\cdot |h|^{\len(\alpha)}$. The case $l_0=nil$ is trivial. For $l_0\neq nil$, observe that for all lists $\alpha, b$ the term $g(\overline{v}, cons(\alpha,b))$ in the definition of $t$ is given as a combination of $g(\overline{v}, cons(\alpha))$ with list functions, where $g(\overline{v}, cons(\alpha))$ is equivalent to $t_\alpha$. Thus, by applying the induction assumption, the size of a $\Delta_0$-term equivalent to $g(\overline{v}, cons(\alpha,b))$ is bounded by $| h |\cdot | f | \cdot | h |^{\len(\alpha)}$. It follows that for all $\alpha=nil$ or $\alpha\dleq l_0$, the size of the $\Delta_0$-term equivalent to $t_\alpha$ is bounded by $|t_\alpha|^{p(|t_\alpha|)}$, for a polynomial $p$. Hence, the size of a $\Delta_0$-term equivalent to $t$ is bounded by ${\EXP{1}{p(|t|)}}$, for an appropriate polynomial $p$. 

Now let $t$ be an arbitrary explicit flat $\Delta_0+Rec$ term. We use induction on the rank bound $k\geqslant 1$ for $t$ to show the claim of the lemma. If $t$ is a $\Delta_0$-term then there is nothing to prove. Assume there is a recursive term $s=Rec[f,h,l](\overline{v})$ in $t$, then $f,h$ are $\Delta_0$-terms (since $t$ is flat). If $k=1$ then $\rank(s)=1$ and hence, $l$ is a variable-free $\Delta_0$-term, since $t$ is explicit. By Lemma 2 in \cite{OspichevPonomaryov}, $l$ is equivalent to a constant list $l_0$ of size bounded by a polynomial in the size of $l$ (and hence, in the size of $s$). Then by the above shown, there is a $\Delta_0$-term equivalent to $s$, for which the claim of the lemma holds. 

For $k\geqslant 2$, observe that the term $l$ is given as a combination of recursive terms of rank less than $k$ with list functions. Hence, by the induction assumption, the size of a constant list term $l_0$ equivalent to $l$ is bounded by $\EXP{(k-1)}{p(|s|)}$, for a polynomial $p$. Then by the above shown we conclude that there is a $\Delta_0$-term equivalent to $s$ of size bounded by $\EXP{k}{p(|t|)}$, for a polynomial $p$. Since the choice of $s$ in the term $t$ was arbitrary and the number of non-standard terms in $t$ is bounded by $|t|$, we obtain the required statement. \QED
\end{proof}

\begin{lemma}[Hardness of Model Checking]\label{Lem:ReduceRegularLanguages}
Let $L$ be an extension of the language of $\Delta_0$-formulas such that $\times$ is expressible in $L$ and for all $n\geqslant 0$ and some $k\geqslant 1$,  $\circ^{\EXP{k}{n}}$ is expressible by a $L$-term of size polynomial in $k$ and $n$. Then the set of $L$-formulas true in a given structure $HW(\mathcal{M})$ is $\NmanyEXPTIME{k}$-hard.
\end{lemma}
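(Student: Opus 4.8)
The plan is to prove the bound by a polynomial-time many-one reduction from the acceptance problem of an arbitrary nondeterministic Turing machine $M$ that works in time $\EXP{k}{p(n)}$ on inputs of length $n$, for a fixed polynomial $p$. I would fix a single trivial structure $HW(\mathcal{M}_0)$ in which finitely many pairwise distinct constant lists are available (obtained, as in Lemma~\ref{Lem:SuccinctnessRec}, by iterating $cons$ over $nil$); these encode the tape alphabet, the states, a blank, and two bits $0,1$. Given $M$ and an input $w$ with $|w|=n$, I would build in polynomial time an $L$-formula $\varphi_{M,w}$ with $HW(\mathcal{M}_0)\models\varphi_{M,w}$ iff $M$ has an accepting run on $w$. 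Since $M$ and $p$ are fixed while $w$ varies, and every set in $\NmanyEXPTIME{k}$ is recognised by such a machine, this gives $\NmanyEXPTIME{k}$-hardness of the set of $L$-formulas true in $HW(\mathcal{M}_0)$. It is worth noting that finite lists correspond to finite languages and $\times,\circ^{m}$ to concatenation and exponentiation of regular expressions, so the construction can equivalently be phrased as a reduction through a membership/matching problem for exponentially succinct regular expressions.

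Set $N=\EXP{k}{p(n)}$ as the common time and space bound. An accepting run is a tableau of $N$ configurations, each of length $N$, hence a word of length $\EXP{k}{q(n)}$ over a fixed finite alphabet $\Sigma$ (tape symbols together with head/state markers and coordinate bits), for a suitable polynomial $q$. The heart of the construction is that, instantiating the hypothesis at parameter $q(n)$, the map $\circ^{\EXP{k}{q(n)}}$ is expressible by a variable-free $L$-term of size polynomial in $n$; applying it to the constant list encoding $\Sigma$ yields a term $D$ whose value is the list of all words of length $\EXP{k}{q(n)}$ over $\Sigma$. Then the single bounded quantifier $\exists X\in D$ realises the nondeterministic guess of a candidate computation $X$, whose length is automatically forced. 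Thus $\varphi_{M,w}$ will have the shape $\exists X\in D\ \psi(X)$, with $\psi$ a $\Delta_0$-formula asserting that $X$ is an accepting run.

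The formula $\psi$ must check that the first row spells $w$ in the initial control state followed by blanks, that some cell carries an accepting state, and that each cell of row $t+1$ is consistent with the three cells above it in row $t$ (the Cook--Levin window condition). The first two are boundary conditions over a bounded initial segment and over the elements of $X$, and are directly $\Delta_0$. For the window condition I would avoid any positional alignment of two equal-length rows --- which the bounded quantifiers cannot perform --- by \emph{tagging}: I arrange the guess so that each cell is a pair $\langle a,c\rangle$ of a binary \emph{coordinate} $a$ recording the $(t,i)$ address and a content $c$, force the coordinates to enumerate all addresses in row-major order using only \emph{local} comparisons of consecutive elements (reachable through $head$ and $tail$ of initial segments $P\sqsubseteq X$) together with the constant minimal and maximal addresses, and then express every neighbour relation by \emph{equality} of coordinates rather than by counting positions. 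The coordinate grid is produced by the assumed term for $\times$, since $\times[\mathrm{Rows},\mathrm{Cols}]$ lists all coordinate words in the correct order, and the successor, the same-column, and the row-below relations on binary coordinates are $\Delta_0$-definable: each is a local rewrite of a suffix of $1$s into $0$s and of one $0$ into $1$, captured by bounded quantification over initial segments $u\sqsubseteq a$, bounded quantification over elements $\forall x\in u\ (x=1)$, and the primitive equality of list terms.

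The main obstacle will be exactly this addressing at the $k$-fold exponential scale: making neighbour lookups expressible without an offset-$N$ operation on lists, while keeping every term of the formula of polynomial size when the only succinct primitives are $\times$ and $\circ^{\EXP{k}{n}}$. The delicate point is the exponent bookkeeping, since coordinates have length $\EXP{(k-1)}{q(n)}$, one tower level below the guessed word; the coordinate machinery must therefore be generated by iterated $\times$ and by the assumed $\circ$-term rather than naively, and the guess itself must be presented over a small fixed alphabet of bits and delimiters so that its bounding term stays within reach of $\circ^{\EXP{k}{q(n)}}$. Once $\psi$ is written, correctness is routine: a genuine accepting run yields a witness $X$, and conversely any $X$ satisfying $\psi$ decodes to an accepting run by induction on rows. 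All terms are polynomial in $n$ by the hypotheses (and with the size control of Lemma~\ref{Lem:Unfolding} available if needed), so $\varphi_{M,w}$ and the reduction are computable in polynomial time, which establishes $\NmanyEXPTIME{k}$-hardness.
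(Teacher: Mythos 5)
Your reduction is a genuinely different route from the paper's (the paper reduces from Stockmeyer's inequality problem for regular-like expressions with exponentiation, where the entire Turing-machine encoding is already packaged inside the expressions, and the only $L$-formula needed is the two-quantifier list comparison $\exists x\in list(E_1)\,(x\notin list(E_2))\vee\exists x\in list(E_2)\,(x\notin list(E_1))$). But your direct Cook--Levin tableau construction has a gap precisely at the point you yourself flag as ``the main obstacle,'' and the devices you propose to overcome it are not available in this language. The paper's $\Delta_0$-syntax requires the bounding term of every quantifier $\forall x\in t$, $\exists x\in t$, $\forall x\sqsubseteq t$, $\exists x\sqsubseteq t$ to be \emph{variable-free}. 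Your consistency check quantifies over ``initial segments $P\sqsubseteq X$'' of the guessed computation $X$, and your coordinate arithmetic quantifies over ``initial segments $u\sqsubseteq a$'' of a coordinate $a$ extracted from a quantified cell; in both cases the bounding term contains a bound variable, so these are not legal $\Delta_0$ quantifiers. (The predicates $\in,\sqsubseteq$ may appear freely in \emph{atomic} formulas, but that only lets you test membership of ground-bounded witnesses in $X$, which loses exactly the positional information your row-major enumeration and successor/same-column/row-below relations depend on.) Moreover, the lemma's hypothesis only supplies $\times$ and a polynomial-size term for $\circ^{\EXP{k}{n}}$; it gives you no licence to assume that successor or alignment relations on binary coordinates of length $\EXP{(k-1)}{q(n)}$ are $\Delta_0$-definable, and establishing that would require solving the same addressing problem one tower level down.

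This is why the paper routes the hardness through Stockmeyer: the regular-like expressions with union, concatenation, and $\cdot^{\EXP{k}{n}}$ are translated homomorphically into $L$-terms ($conc$ for union, $t_\times$ for concatenation, $t_{\EXP{k}{n}}$ for exponentiation), so all local consistency of the encoded computation is enforced by the \emph{structure of the terms}, never by quantification inside a guessed list. The resulting formula needs only quantifiers bounded by the ground terms $list(E_1)$ and $list(E_2)$, which is exactly what the syntax permits. To repair your argument you would either have to show how to express the tableau conditions using only ground-bounded quantifiers and atomic $\in/\sqsubseteq$ tests (which is not done and is the crux of the difficulty), or fall back on a reduction of the Stockmeyer type.
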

\begin{proof}
The lemma is proved by a reduction of the inequality problem for regular-like expressions without the Kleene star, but with the exponentiation operation. Regular-like expressions of this kind are defined over a finite alphabet $\Sigma$ by using the operation of union $\cup$, concatenation $\cdot$, and exponentiation $\cdot^{\EXP{k}{n}}$, where $k\geqslant 1$ and $n\geqslant 0$. For a regular-like expression $E$, the language $L(E)$ is given inductively as a subset of all strings over $\Sigma$ by the following definition: 
\begin{itemize}
\item $L(a)=\{a\}$, for $a\in\Sigma$
\item $L(E_1\cup E_2)=L(E_1)\cup L(E_2)$
\item $L(E_1\cdot E_2) = \{s_1\cdot s_2 \mid s_i\in L(E_i), \ i=1,2\}$
\item $L(E^{\EXP{k}{n}}) = \underbrace{L(E)\cdot\ldots \cdot L(E)}_{\EXP{k}{n}~\text{\textit{times}}}$
\end{itemize}

The size of a regular-expression is the length of the string, which represents it. The inequality problem for regular-like expressions is the set of pairs $\langle E_1, E_2 \rangle$ such that $L(E_1)\neq L(E_2)$. It is shown in \cite{Stockmeyer} that the inequality problem is $\NEXPTIME$-hard for regular-like expressions, in which exponentiation is restricted to $\EXP{1}{n}$, for $n\geqslant 0$. The proof employs a direct reduction from the halting problem for non-deterministic Turing machines making at most $\EXP{1}{n}$-many steps on an input of size $n\geqslant 0$. For the reduction, the subsequent configurations of the TM are encoded by regular-like expressions over an alphabet $\Sigma$ of the TM , which represent the tape content and the state of the TM. In particular, each configuration is represented as a word of length $2\cdot\EXP{1}{n}+1$. Expressions of the form $(\sigma)^{f(n)}$ are used to refer to (parts of) configurations of TM, where $\sigma\subseteq\Sigma\cup\{\lambda\}$ and $f(n)=c_0\cdot\EXP{1}{n}+nc_1+c_2$, with $c_0\geqslant 0$ and $c_1, c_2$ being integers such that $|c_i|\leqslant n$, for $i=0,1,2$.

If instead, each configuration is represented by a word of length  $2\cdot\EXP{1}{n}+n+2$ (which can be made without loss of generality), then every regular-like expression of the form $(\sigma)^{f(n)}$, with $f(n)$ as above, can be replaced by $(\sigma)^{g(n)}$, where $g(n)=c_0\cdot\EXP{1}{n}+nd_1+d_2$ is a function, with $0 \leqslant c_0,d_1, d_2\leqslant n$. Then the proof works for exponentiation $\EXP{k}{n}$ with any $k\geqslant 1$, $n\geqslant 0$ and gives $\NmanyEXPTIME{k}$-hardness of the inequality problem for regular-like expressions with the operation $\cdot^{\EXP{k}{n}}$. We reduce this problem to checking the truth of $L$-formulas.

Let $t_\times(x)$ be a $L$-term, which represents $\times$, and $t_{\EXP{k}{n}}(x_1,x_2)$ a $L$-term of size polynomial in $k$ and $n$, which represents $\circ^{\EXP{k}{n}}$. For a regular-like expression $E$ we inductively define the $L$-term $list(E)$, which encodes the language $L(E)$ as:
\begin{itemize}
\item $list(\{a\}) = \langle\langle a \rangle\rangle$, for $a\in\Sigma$;
\item $list(E_1\cup E_2) = conc(list(E_1), \ list(E_2))$;
\item $list(E_1\cdot E_2) = t_\times(list(E_1), \ list(E_2))$;
\item $list(E^{\EXP{k}{n}}) = t_{\EXP{k}{n}}(list(E))$.
\end{itemize}

Clearly, the size of $list(E)$ is linear in the size of the expression $E$. 

Now it suffices to note that for any structure $HW(\mathcal{M})$ and any regular-like expressions $E_1, E_2$, it holds $L(E_1)\neq L(E_2)$ iff $$HW(\mathcal{M})\models \exists x\in list(E_1) (\ x\not\in list(E_2) \ ) \ \vee \  \exists x\in list(E_2) (\ x\not\in list(E_1) \ )$$ Indeed, if there is such $x$ then it has the form $\langle a_1\lldots a_k\rangle$, where $k\geqslant 1$, $a_i\in\Sigma$, and then the word $a_1\ldots a_k$ witnesses the difference between $L(E_1)$ and $L(E_2)$. The opposite direction is straightforward. \QED
\end{proof}

\begin{theorem}[Complexity of Model Checking for $\Delta_0$+Rec]\label{Thm:ModelCheckingRec}
The set of $\Delta_0$+Rec formulas true in a given structure $HW(\mathcal{M})$ contains the class $\manyEXPTIME{k}$, for every $k\geqslant 1$, and hence, it is non-elementary. It follows that there is a $\Delta_0$+Rec formula $\varphi$, which is not equivalent to a $\Delta_0$-formula $\psi$ of size polynomial in $|\varphi|$.
\end{theorem}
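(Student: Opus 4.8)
The plan is to obtain the lower bound as a direct instance of the two preceding lemmas, and then to read off non-elementarity and the succinctness corollary. First I would verify that the language $\Delta_0$+Rec meets the hypotheses of Lemma~\ref{Lem:ReduceRegularLanguages} for \emph{every} $k\geqslant 1$. By Lemma~\ref{Lem:SuccinctnessRec} the map $\times$ is expressible in $\Delta_0$+Rec (by the term $multiply$), and for every $k\geqslant 1$ and all $n\geqslant 0$ the map $\circ^{\EXP{k}{n}}$ is expressible by a recursive term whose size is linear, hence polynomial, in $k$ and $n$. These are exactly the premises of Lemma~\ref{Lem:ReduceRegularLanguages}, so instantiating it yields that the set $A$ of $\Delta_0$+Rec formulas true in a given $HW(\mathcal{M})$ is $\NmanyEXPTIME{k}$-hard for every $k\geqslant 1$. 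Since $\manyEXPTIME{k}\subseteq\NmanyEXPTIME{k}$ and \PTIME-computable $m$-reductions compose, $A$ is in particular $\manyEXPTIME{k}$-hard for every $k$, which is the intended reading of ``$A$ contains the class $\manyEXPTIME{k}$''.

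For non-elementarity I would argue by contradiction using the time hierarchy theorem, recalling that a set is elementary iff it lies in $\manyEXPTIME{m}$ for some fixed $m$. Assume $A\in\manyEXPTIME{m}$. By the previous step $A$ is $\NmanyEXPTIME{m+1}$-hard, so every $B\in\NmanyEXPTIME{m+1}$ reduces to $A$ by a \PTIME-computable map; composing this map with a $\manyEXPTIME{m}$ decider for $A$ and using $\PTIME\subseteq\manyEXPTIME{m}$ keeps the running time within $\EXP{m}{p(n)}$, so $B\in\manyEXPTIME{m}$. Hence $\NmanyEXPTIME{m+1}\subseteq\manyEXPTIME{m}$. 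But the deterministic time hierarchy theorem gives $\manyEXPTIME{m}\subsetneq\manyEXPTIME{m+1}\subseteq\NmanyEXPTIME{m+1}$, a contradiction. Thus $A\notin\manyEXPTIME{m}$ for any $m$, i.e.\ $A$ is non-elementary.

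For the succinctness corollary I would contrast $A$ with the fact, established in \cite{OspichevPonomaryov} and reused in the proof of the $\Delta_0$+bSearch theorem, that model checking for $\Delta_0$-formulas is $\PSPACE$-complete and hence elementary. The guiding idea is that if a fixed polynomial $q$ bounded the size of a $\Delta_0$-equivalent of every $\Delta_0$+Rec formula $\varphi$ by $q(|\varphi|)$, then model checking for $\Delta_0$+Rec would collapse to the elementary $\Delta_0$ case, contradicting non-elementarity. The step I expect to be the real obstacle is making this rigorous, because the collapse presupposes that the short equivalent can be \emph{located} effectively: the only a priori route from $\varphi$ to an equivalent $\Delta_0$-formula is the Unfolding Lemma (Lemma~\ref{Lem:Unfolding}), whose output has non-elementary size once the rank is unbounded, and selecting the guaranteed short equivalent appears to require deciding semantic equivalence, which is itself costly. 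One must also respect that equivalence is semantic over \emph{all} $HW(\mathcal{M})$ and assignments, so a closed sentence of fixed truth value is trivially equivalent to a constant formula; the witnessing $\varphi$ must therefore be an \emph{open} formula defining a genuinely structure-dependent relation.

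To sidestep this effectiveness gap I would instead aim for a direct formula-size lower bound, exploiting the structural fact that the value of any $\Delta_0$ ground term $t$ is a list of length at most $|t|$ (constant lists contribute their length, $cons$ adds one element, while $head$, $tail$, $conc$ do not increase the total). Consequently every bounded quantifier of a $\Delta_0$-formula $\chi$ ranges over at most $|\chi|$ elements or initial segments, whereas by Lemma~\ref{Lem:SuccinctnessRec} a $\Delta_0$+Rec term of size polynomial in $n$ can denote a list with non-elementarily many pairwise distinct entries (e.g.\ $\circ^{\IEXP{n}}$ applied to $\langle\langle a\rangle,\langle b\rangle\rangle$ names the $\EXP{2}{n}$ binary words of length $\IEXP{n}$ from a term of size $O(n)$). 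I would then seek an open formula $\varphi(x)$, built from such a richly populated list or from the term $t_\times$ of the hardness reduction, whose defining $\Delta_0$-formula is forced to distinguish more cases than any short formula can. Proving this final super-polynomial size bound is, in my estimation, the genuine technical heart of the corollary.
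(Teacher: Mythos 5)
Your first two paragraphs coincide with the paper's proof of the first assertion: it instantiates Lemma~\ref{Lem:ReduceRegularLanguages} with the terms from Lemma~\ref{Lem:SuccinctnessRec} to obtain $\NmanyEXPTIME{k}$-hardness for every $k\geqslant 1$, and then concludes non-elementarity from $\manyEXPTIME{k}\subseteq\NmanyEXPTIME{k}$; your explicit appeal to the time hierarchy theorem merely makes that last step more precise than the paper bothers to.

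For the succinctness corollary the two routes genuinely diverge, and the ``effectiveness gap'' you flag is aimed at exactly the argument the paper in fact uses: it assumes every $\Delta_0$+Rec formula has a polynomial-size equivalent $\Delta_0$-formula, infers $\cup_{k\geqslant 1}\manyEXPTIME{k}\subseteq\PSPACE$ from the $\PSPACE$-completeness of $\Delta_0$ model checking, and contradicts $\DEXPTIME\not\subseteq\PSPACE$. As you observe, that collapse presupposes the short equivalent can be \emph{found} within $\PSPACE$, which bare existence does not supply; worse, the sentences produced by the reduction in Lemma~\ref{Lem:ReduceRegularLanguages} contain no uninterpreted predicates, so each has the same truth value in every $HW(\mathcal{M})$ and is trivially equivalent to a constant-size $\Delta_0$-formula ($\top$ or $\bot$) --- such sentences cannot witness the corollary at all. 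Your proposed repair (an open formula over a term denoting non-elementarily many pairwise distinct lists, the observation that a ground $\Delta_0$-term of size $m$ denotes a list of length at most about $m$ so a polynomial-size $\Delta_0$-formula can only inspect elementarily many values of an uninterpreted predicate, and an indistinguishability argument between two structures differing on an uninspected element) is the right way to make the second assertion rigorous, but you explicitly stop short of carrying it out. So, as submitted, your proposal establishes the first assertion by the same method as the paper and only outlines the second; the fair comparison is that the paper's own proof of that second assertion suffers from precisely the defect you identified, while your sketch points at a correct but unfinished repair.
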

\begin{proof}
By Lemma \ref{Lem:SuccinctnessRec}, $\times$ is expressible in $\Delta_0$+Rec and for all $k\geqslant 1$ and $n\geqslant 0$, $\circ^{\EXP{k}{n}}$ is expressible by a $\Delta_0$+Rec term of size linear in $k$ and $n$. Then by Lemma \ref{Lem:ReduceRegularLanguages}, for any $k\geqslant 1$, there exists a $\NmanyEXPTIME{k}$-hard set of $\Delta_0$+Rec formulas true in a given $HW(\mathcal{M})$. For all $k\geqslant 1$, it holds $\manyEXPTIME{k}\subseteq\NmanyEXPTIME{k}$ and hence, the set of $\Delta_0$+Rec formulas true in $HW(\mathcal{M})$ is non-elementary. 

It was proved in \cite{OspichevPonomaryov} that the set of $\Delta_0$-formulas true in a given structure $HW(\mathcal{M})$ is $\PSPACE$-complete. Assume that for any $\Delta_0$+Rec formula $\varphi$, there is an equivalent $\Delta_0$-formula $\psi$ of size polynomial in the size of $\varphi$. Then $\cup_{k\geqslant 1}\manyEXPTIME{k}\subseteq\PSPACE$, which is a contradiction, since already $\DEXPTIME$ is not contained in $\PSPACE$. \QED
\end{proof}

\begin{theorem}[Complexity of Model Checking for Flat $\Delta_0$+Iteration]\label{Thm:ModelCheckingIteration}
The set of flat $\Delta_0$+Iteration formulas true in a given structure $HW(\mathcal{M})$ is $\PSPACE$-hard and it is in $\EXPTIME$ or $\DEXPTIME$ if the number of iterations is given in unary or binary, respectively.
\end{theorem}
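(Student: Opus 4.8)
The plan is to obtain $\PSPACE$-hardness for free and to derive the two upper bounds from the Unfolding Lemma together with a careful accounting of the search tree of a bounded-quantifier model checker. Hardness is immediate: every $\Delta_0$-formula is in particular a flat $\Delta_0$+Iteration formula, and by Theorem 3 in \cite{OspichevPonomaryov} the set of $\Delta_0$-formulas true in a given $HW(\mathcal{M})$ is already $\PSPACE$-complete, hence $\PSPACE$-hard, so the same lower bound transfers to the larger class.

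For the upper bound I would first apply Lemma \ref{Lem:Unfolding} to the input formula $\varphi$, replacing every iterative term by an equivalent $\Delta_0$-term. This yields a $\Delta_0$-formula $\varphi'$ with $HW(\mathcal{M})\models\varphi$ iff $HW(\mathcal{M})\models\varphi'$, and with $|\varphi'|\leqslant \EXP{1}{p(|\varphi|)}$ in the unary case and $|\varphi'|\leqslant \EXP{2}{p(|\varphi|)}$ in the binary case, for a polynomial $p$. The decisive observation is that unfolding enlarges only the \emph{terms}: terms contain no quantifiers, so the number of (nested) bounded quantifiers in $\varphi'$ is bounded by $|\varphi|$, not by $|\varphi'|$. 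This is precisely what keeps the two exponential regimes apart from a third one, and all quantifier bounds in $\varphi'$ remain \emph{ground}, since groundness is preserved under unfolding.

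It then remains to model check $\varphi'$ by the standard recursive bounded-quantifier elimination, fixing a value for each quantified variable in turn, and to bound its running time by (number of leaves of the search tree) $\times$ (work per node). Writing $N=|\varphi'|$, every quantifier bound is a ground subterm of size at most $N$, so by Lemma 2 in \cite{OspichevPonomaryov} its value, and likewise the value of each quantified variable, is a list of size $\mathrm{poly}(N)$; hence each quantifier branches into at most $\mathrm{poly}(N)$ choices and the evaluation of any term of size $\leqslant N$ under such substitutions costs $\mathrm{poly}(N)$ and produces a value of size $\mathrm{poly}(N)$. Since the tree has depth at most $|\varphi|$, the total time is bounded by $\mathrm{poly}(N)^{|\varphi|+1}$. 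Substituting $N=\EXP{1}{p(|\varphi|)}$ gives $\EXP{1}{\mathrm{poly}(|\varphi|)}$, i.e. membership in $\EXPTIME$ in the unary case, and substituting $N=\EXP{2}{p(|\varphi|)}$ gives $\EXP{2}{\mathrm{poly}(|\varphi|)}$, i.e. membership in $\DEXPTIME$ in the binary case.

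The main obstacle is exactly this time accounting. The naive route --- bounding the cost of checking $\varphi'$ by $2^{\mathrm{poly}(|\varphi'|)}$, or invoking the $\PSPACE$ bound for $\Delta_0$-formulas and converting space to time --- overshoots by a full exponential (it would already yield $\DEXPTIME$ in the unary case), because it implicitly treats the exponentially long unfolded terms as if they contributed to the quantifier depth. The care needed is to pin the quantifier depth to $|\varphi|$ and to let only the branching factor and the per-node term-evaluation cost absorb the blow-up in term size, so that the blow-up is raised merely to the fixed power $|\varphi|+1$ rather than being iterated along the entire length of $\varphi'$.
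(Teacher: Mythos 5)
Your proposal is correct and follows essentially the same route as the paper: the $\PSPACE$ lower bound is inherited from the $\Delta_0$ fragment, and the upper bounds come from Lemma~\ref{Lem:Unfolding} combined with brute-force bounded-quantifier elimination, with the decisive point in both arguments being that unfolding blows up only term sizes while the quantifier depth and the groundness of quantifier bounds are preserved, so the blow-up is raised only to the power $|\varphi|$. The paper merely organizes the same accounting differently, interleaving the unfolding with an induction on the quantifier prefix (its claim $(*)$) and unfolding the matrix only at the leaves, whereas you unfold the whole formula upfront and then bound the search tree; the resulting bounds are identical.
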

\begin{proof}
The lower complexity bound follows from Theorem 3 in \cite{OspichevPonomaryov}, where it is shown that the set $\Delta_0$-formulas true in a given structure $HW(\mathcal{M})$ is $\PSPACE$-complete. The upper bound is shown as follows. We consider the case, when the number of iterations in formulas is given in unary, the proof for the binary case is identical.

Let $HW(\mathcal{M})$ be a structure and $\varphi$ a flat $\Delta_0$+Iteration formula of the form $$\Game_1 x_1 \propto_1 t_1 \ \ldots \ \Game_n x_n \propto_n t_n \ \ \psi(x_1\lldots x_n)$$ where $n\geqslant 0$, $\Game_i\in\{\exists, \forall\}$, and $\propto_i \in\{\in, \sqsubseteq\}$, for all $i=1\lldots n$. 
Let $T$ denote the total size of the terms $t_1\lldots t_n$.  We will show by induction on the complexity of $\varphi$ that satisfiability of $\varphi$ can be decided by at most $\EXP{1}{p(|T|)}$-many tests for satisfiability of formulas $\psi'$ obtained from $\psi$ by substitutions of $x_1\lldots x_n$ with vectors of constant lists, each of size bounded by $\EXP{1}{p(T)}$, where $p$ is a polynomial. We refer to this claim further as $(*)$.

Then it follows from the proof of Lemma \ref{Lem:Unfolding} that satisfiability of $\varphi$ can be decided by at most $\EXP{1}{p(T)}$-many tests for satisfiability of $\Delta_0$-formulas, each of which is either $\psi'$ as above (if $\psi$ does not contain non-standard terms) or obtained from $\psi'$ as a value of a $\EXPTIME$-computable function and has size bounded by $\EXP{1}{r(|\varphi|)}$, for some polynomials $p, r$. As $T\leqslant |\varphi|$ and the set of variable-free $\Delta_0$-formulas true in a given structure $HW(\mathcal{M})$ is $\PTIME$-computable, we obtain the statement of the theorem.

We now show that claim $(*)$ holds. If $\varphi$ is quantifier-free, there is nothing to prove. Now assume $\varphi$ has the form $\Game x \propto t \ \theta(x)$, where $\Game\in\{\exists, \forall\}$ and $\propto \in\{\in, \sqsubseteq\}$. It is equivalent to the formula $\varphi' = \Game x \propto t_0 \ \theta(x)$, where $t_0$ is a constant term equivalent to $t$. It follows from the proof of Lemma \ref{Lem:Unfolding} that $t_0$ can be obtained as a value of a $\EXPTIME$-computable function. In particular, the number and the size of lists $a\propto t_0$ is bounded by $\EXP{1}{p(T)}$, where $p$ is a polynomial. Then satisfiability of $\varphi$ can be decided by $\EXP{1}{p(T)}$-many tests of satisfiability of $\theta(a)$, one for each $a\propto t_0$, and thus, by applying the induction assumption to the formulas $\theta(a)$, we obtain the required claim. \QED
\end{proof}

We now turn to the complexity results on satisfiability. Note that Theorem \ref{Thm:ModelCheckingRec} provides a lower bound on the complexity of testing satisfiability of $\Delta_0$+Rec, while Theorem \ref{Thm:ModelCheckingIteration} does not provide any lower bound (other than \PSPACE, which is known already for $\Delta_0$-formulas). However, it is possible to obtain tight complexity results by using the reduction, which we describe next.

\begin{lemma}[Hardness of Satisfiability]\label{Lem:ReduceDomino}
Let $L$ be an extension of the language of $\Delta_0$-formulas such that for all $n\geqslant 0$ and some $k\geqslant 1$, a $\EXP{k}{n}$-list is expressible in $L$ by a $L$-term of size polynomial in $k$ and $n$. Then the set of satisfiable $L$-formulas is $\NmanyEXPTIME{k}$-hard.
\end{lemma}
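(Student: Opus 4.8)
The plan is to reduce from the exponentially bounded tiling (domino) problem, which for the grid of side $\EXP{k}{n}$ is $\NmanyEXPTIME{k}$-complete: given a domino system $\mathcal{D}=(D,H,V)$ with horizontal and vertical compatibility relations $H,V\subseteq D\times D$ and an input word $w=w_1\ldots w_n$, one asks whether the grid of side $\EXP{k}{n}$ can be tiled so that horizontally and vertically adjacent cells are $H$- and $V$-compatible and the first row encodes $w$. The key conceptual point is that satisfiability is the right target: the freedom to choose the structure $HW(\mathcal{M})$ and to interpret a fresh predicate realizes exactly the nondeterministic guess of a tiling, so ``a tiling exists'' will be made equivalent to ``the constructed formula has a model''.

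First I would fix the $L$-term $t$ that, by hypothesis, denotes an $\EXP{k}{n}$-list in every $HW(\mathcal{M})$ and has size polynomial in $k$ and $n$. The $\EXP{k}{n}+1$ initial segments of $t$ are pairwise distinct, since they have pairwise distinct lengths, so they serve as grid coordinates; a cell is a pair $(x,y)$ with $x\dleq t$ and $y\dleq t$. I introduce urelement constants $d$ for the finitely many tiles $d\in D$ (collected in a ground list $\ell_D$) and a single predicate $P(x,y,z)$ read as ``cell $(x,y)$ carries tile $z$''. Adjacency is captured by the immediate-successor relation in the prefix order, which is $\Delta_0$-definable over the ground term $t$:
$$\mathrm{succ}(x,x')\ :=\ x\dleq x'\wedge\neg(x=x')\wedge\forall z\dleq t\,\big((x\dleq z\wedge z\dleq x')\rightarrow(z=x\vee z=x')\big).$$
The reduction outputs $\varphi:=\varphi_{\mathrm{fun}}\wedge\varphi_{\mathrm{hor}}\wedge\varphi_{\mathrm{ver}}\wedge\varphi_{\mathrm{init}}$, where $\varphi_{\mathrm{fun}}$ states that every cell carries exactly one tile (namely $\forall x,y\dleq t\,\exists!\,z\in\ell_D\,P(x,y,z)$, written out with bounded quantifiers), $\varphi_{\mathrm{hor}}$ and $\varphi_{\mathrm{ver}}$ state compatibility along successor pairs in the first and second coordinate using the hardcoded finite relations $H$ and $V$, and $\varphi_{\mathrm{init}}$ pins down the first row by a length-$n$ chain $y_0=nil,\ \mathrm{succ}(y_0,y_1)\lldots\mathrm{succ}(y_{n-2},y_{n-1})$ and asserts $P(y_i,nil,d_{w_i})$ for $i<n$. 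Every bounded quantifier ranges over the ground term $t$ (or the ground list $\ell_D$), so $\varphi$ is a legal $L$-formula.

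Next I would check the resources and correctness. Since $|t|$ is polynomial in $k$ and $n$, $k$ is fixed, and $\varphi$ contains only constantly many occurrences of $t$ together with an $O(n)$-sized initial-condition gadget and constant-sized compatibility clauses, the whole map $w\mapsto\varphi$ is computable in polynomial time. For soundness, from a valid tiling $\tau$ I build a model by taking any $HW(\mathcal{M})$, in which $t$ necessarily denotes a list of length $\EXP{k}{n}$, and interpreting $P$ so that $P(x,y,z)$ holds iff $x,y$ are initial segments of $t$ of lengths $i,j$ and $\tau(i,j)=z$; the prefix order matches arithmetic succession, so all conjuncts hold. Conversely, any model of $\varphi$ yields, via $\varphi_{\mathrm{fun}}$, a function from coordinate pairs (the initial segments of $t$) to tiles, and $\varphi_{\mathrm{hor}},\varphi_{\mathrm{ver}},\varphi_{\mathrm{init}}$ guarantee it is a valid tiling extending $w$.

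The hard part, I expect, is not the bookkeeping but getting the coordinate and adjacency encoding to work using only the prefix order $\dleq$: in the succinct constructions of Lemma~\ref{Lem:SuccinctnessRec} and Lemma~\ref{Lem:SuccinctnessIteration} all elements of the $\EXP{k}{n}$-list are equal (copies of $nil$), so the elements cannot be used to name positions, and one is forced to index by the lengths of initial segments and to express adjacency by immediate succession in that order. A related subtlety is referencing the bounded initial row: since individual initial segments of $t$ are not available as short ground terms, the first $n$ positions are reached by an explicit successor chain of length $n$, which keeps $\varphi$ polynomial while correctly anchoring $w$. I should also mind the off-by-one between the $\EXP{k}{n}+1$ initial segments and the grid side, which is harmless because the tiling problem for side $\EXP{k}{n}+1$ is equally $\NmanyEXPTIME{k}$-complete.
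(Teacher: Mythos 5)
Your proposal is correct and follows essentially the same route as the paper: a reduction from the $\EXP{k}{n}\times\EXP{k}{n}$ bounded domino tiling problem, with grid coordinates given by initial segments of the hypothesized $\EXP{k}{n}$-list term and axioms for functionality, horizontal/vertical compatibility, and the initial row. The only differences are cosmetic encoding choices: the paper uses one binary predicate $T_i$ per tile and expresses adjacency via $x_1=tail(x_2)$, whereas you use a single ternary predicate over a ground list of tile constants and a $\Delta_0$-definable immediate-successor formula, both of which work equally well.
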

\begin{proof}
The lemma is proved by a reduction from the (bounded) domino tiling problem \cite{Gurevich}. A \emph{domino system} is a triple $\D=(T,V,H,init)$, where $T=\{1,\ldots ,p\}$, for $p\geqslant 1$, is a finite set of tiles, $H,V\subseteq T\times T$ are horizontal and vertical tile matching relations, and $init=\tuple{t_1,\ldots , t_s}$ is an initial tiling condition, where $t_i\in T$, for $1\leqslant i \leqslant s$, and $s\geqslant 0$. A \emph{tiling} of size $m\times m$ for a domino system $\D$ is a mapping $t:\{1,\ldots ,m\}\times\{1,\ldots ,m\}\rightarrow T$ such that $\tuple{t(y-1,x), \ t(y,x)}\in V$, for $1 < y \leqslant m$, $1 \leqslant x \leqslant m$, $\tuple{t(y,x-1), \ t(y,x)}\in H$, for $1 \leqslant y \leqslant m$, $1 < x \leqslant m$, and $t(1,x)=t_x$, for $1\leqslant x \leqslant s$. The size of  a domino system is measured as $s$ plus the sum of the cardinalities of $V,H$, and $T$. It is known that the set of domino systems, which admit a tiling of size $\EXP{k}{n}\times\EXP{k}{n}$, where $k\geqslant 1$, $n\geqslant 0$, is $\NmanyEXPTIME{k}$-complete.

Let $\D$ be a domino system, $k\geqslant 1$, $n\geqslant 0$, and let $\explst$ be a $L$-term of size polynomial in $k$ and $n$, which represents a $\EXP{k}{n}$-list. We define a set of $L$-formulas $\bt$ with quantification over elements and initial segments of $\explst$, which encode the tiling problem for $\D$ and a grid of dimension $\EXP{k}{n}\times\EXP{k}{n}$  to be ``tiled''. We assume without loss of generality that $s\leqslant \EXP{k}{n}$.

The theory $\bt$ is defined over a signature $\Sigma$, which contains a binary predicate $T_i$, for every tile $i\in\{1,\lldots p\}$. In particular, it includes predicates $t_1,\ldots , t_s$ corresponding to the tiles in the initial condition. In our encoding of the tiling problem, we represent an element of a grid of an exponential size by a pair of lists being initial segments of $\explst$ (there are $\EXP{k}{n}$-many of them), which corresponds to the ``coordinate'' of the grid element. 

\smallskip

First of all, the theory $\bt$ contains axioms 

\begin{equation}\label{Eq:AtLeast1Tile}
\forall x,y\sqsubseteq\explst \ \bigvee_{i\in T} T_i(\ y, x\ )
\end{equation}

and

\begin{equation}\label{Eq:AtMost1Tile}
\forall x,y\sqsubseteq\explst \  \neg (\ T_i(\ y, x \ )\wedge T_j( \ y,x \ ) \ )
\end{equation}

for all distinct $i,j\in T$. 

\smallskip

These axioms state that every element of the grid is ``occupied'' by exactly one tile.

\smallskip

The next axiom encodes the initial tiling condition $\langle t_1\lldots t_s\rangle$ and we assume that it is present in $\bt$ if $s\geqslant 1$:

\begin{equation}\label{Eq:InitialCondition}
t_1(\explst,\explst)\wedge t_2(\explst,tail(\explst)) \wedge \ldots\wedge t_s(\explst,\underbrace{tail(tail(\ldots tail}_{s-1~\text{times}}{}(\explst))\ldots)
\end{equation}




\smallskip

The following axioms represent the vertical matching condition on tiling:

\begin{equation}\label{Eq:VMatching}
\forall \ x,y_1,y_2\dleq\explst \ \  \neg(\ y_1=tail(y_2) \wedge T_i(\ y_1, x\ )) \wedge T_j(\ y_2, x \ )) \ ) 
\end{equation}
for all $i,j\in T$ such that $\tuple{j,i}\not\in V$. 

\smallskip

Finally, the next axioms represent the horizontal matching condition:

\begin{equation}\label{Eq:HMatching}
\forall \ x_1,x_2,y\dleq\explst \ \ \neg (\ x_1=tail(x_2) \wedge T_i(\ y,x_1 \ ) \wedge T_j( \ y,x_2 \ ) \ ) 
\end{equation}
for all $i,j\in T$ such that $\tuple{j,i}\not\in H$. 

\medskip

The definition of the theory $\bt$ is complete. 

\medskip

It is easy to see that the size of $\bt$ is polynomially bounded by the size of the domino system $D$. We claim that $D$ admits a tiling of size $\EXP{k}{n}\times\EXP{k}{n}$ iff there is a structure $\mathcal{M}$ such that $HW(\mathcal{M})\models\bt$. 

For a list $s$ and $1\leqslant z \leqslant \len(s)$, let $seg^z(s)$ denote the initial segment of $s$, which consists of $(\len(s)-z+1)$-many elements. 

$(\Leftarrow)$: Given a model $HW(\mathcal{M})$ of $\bt$, define a mapping $t:\{1,\ldots ,\EXP{k}{n}\}\times\{1,\ldots ,\EXP{k}{n}\}\rightarrow T$ by setting $t(y,x)=k$ iff $HW(\mathcal{M})\models T_k(seg^y(\explst), seg^x(\explst))$. By axioms \ref{Eq:AtLeast1Tile},\ref{Eq:AtMost1Tile}, the mapping $t$ is well defined. By axiom \ref{Eq:InitialCondition}, it respects the initial tiling condition and by axioms \ref{Eq:VMatching},\ref{Eq:HMatching} it satisfies the vertical and horizontal matching conditions. Thus, the mapping $t$ is a tiling.

$(\Rightarrow)$: Given a tiling $t$, consider a structure $HW(\mathcal{M})$ of signature $\Sigma$, in which $\explst$ is interpreted as some list $s$ (of length $\EXP{k}{n}$) and the binary predicates are interpreted as follows: for any tile $k\in T$ and lists $l_1, l_2$, it holds $HW(\mathcal{M})\models T_k(l_1, l_2)$ iff $l_1=seg^y(s)$ and $l_2= seg^x(s))\rangle$, for some $1\leqslant x,y \leqslant \EXP{k}{n}$ and $t(y,x)=k$. Since $t$ is a map, the structure $HW(\mathcal{M})$ defined this way is a model of axioms \ref{Eq:AtLeast1Tile},\ref{Eq:AtMost1Tile}. As $t$ respects the initial condition and matching conditions, $HW(\mathcal{M})$ is a model of axioms \ref{Eq:InitialCondition}-\ref{Eq:HMatching} and hence, it is a model of $\bt$. \QED
\end{proof}

Now we are in the position to formulate the complexity results on satisfiability of flat $\Delta_0$-formulas extended with iterative or recursive terms. 

\begin{theorem}[Complexity of Satisfiability for Flat $\Delta_0$+Iteration]\label{Thm:SatisfiabilityIteration}
The set of satisfiable flat $\Delta_0$+Iteration formulas is $\NEXPTIME$-complete if the number of iterations is given in unary and it is $\NDEXPTIME$-complete if the number of iterations is given in binary.
\end{theorem}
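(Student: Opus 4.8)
The plan is to establish the two matching bounds separately, handling the unary ($k=1$) and binary ($k=2$) cases in parallel, since the only difference is one extra exponential in the term sizes.

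For the \textbf{hardness} direction I would appeal directly to Lemma~\ref{Lem:ReduceDomino}. By Lemma~\ref{Lem:SuccinctnessIteration}, a $\EXP{1}{n}$-list (resp.\ a $\EXP{2}{n}$-list) is expressible by the iterative term $<n>Iter[\langle nil\rangle, conc(g(\alpha),g(\alpha))]$ with the iteration count in unary (resp.\ binary), and this term is \emph{flat}, since its parameters $\langle nil\rangle$ and $conc(g(\alpha),g(\alpha))$ are $\Delta_0$-terms; its size is linear in $n$, hence polynomial in $k$ and $n$ for $k=1$ (resp.\ $k=2$). Thus the hypothesis of Lemma~\ref{Lem:ReduceDomino} is met with $k=1$ (resp.\ $k=2$). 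Since the reduction there builds its formulas using only this term together with $tail$, equality, the predicates $T_i$, and $\sqsubseteq$-bounded quantifiers — all $\Delta_0$-constructs — every formula it produces is a flat $\Delta_0$+Iteration formula whose iteration counts are unary (resp.\ binary). This yields $\NEXPTIME$-hardness (resp.\ $\NDEXPTIME$-hardness), as $\NmanyEXPTIME{1}=\NEXPTIME$ and $\NmanyEXPTIME{2}=\NDEXPTIME$.

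For \textbf{membership} I would mimic the proof of Theorem~\ref{Thm:ModelCheckingIteration}, replacing its final deterministic truth test by a nondeterministic satisfiability test. Given a flat $\Delta_0$+Iteration formula $\varphi$, I would first unfold every iterative term by Lemma~\ref{Lem:Unfolding} to obtain an equivalent $\Delta_0$-formula $\varphi'$ with $|\varphi'|\leqslant\EXP{1}{p(|\varphi|)}$ (resp.\ $\EXP{2}{p(|\varphi|)}$). The crucial observation — and what keeps the blow-up within a single (resp.\ double) exponential — is that unfolding enlarges the \emph{terms} of $\varphi$ but changes neither the number of quantifiers nor the number of atoms, so $\varphi'$ still has at most $|\varphi|$ bounded quantifiers. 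Since the quantifier ranges of $\varphi'$ are ground $\Delta_0$-terms, they evaluate to fixed constant lists (the same in every $HW(\mathcal{M})$) of size at most $\EXP{1}{p'(|\varphi|)}$ (resp.\ $\EXP{2}{p'(|\varphi|)}$); expanding each $\forall$ into a conjunction and each $\exists$ into a disjunction over the elements/initial segments of its range therefore produces a quantifier-free boolean combination $\chi$ of ground atoms of size at most $\EXP{1}{q(|\varphi|)}$ (resp.\ $\EXP{2}{q(|\varphi|)}$), because the product of the at-most-$|\varphi|$ range cardinalities stays singly (resp.\ doubly) exponential.

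Finally I would reduce satisfiability of $\chi$ to propositional satisfiability exactly as in the $\NP$-argument of the Preliminaries: evaluate every ground list term to its value, replace each determined atom $s\propto t$ ($\propto\in\{\in,\sqsubseteq\}$) and each equality by its truth value, and replace each remaining atom $P(s_1\lldots s_m)$ by a fresh boolean variable $x_{P(s_1\lldots s_m)}$ (identical ground atoms receiving the same variable). This yields a propositional formula $\beta$, of size still at most $\EXP{1}{q(|\varphi|)}$ (resp.\ $\EXP{2}{q(|\varphi|)}$), that is satisfiable iff $\varphi$ is. A nondeterministic machine can compute $\beta$ within the allotted time and then guess and verify a satisfying assignment in time polynomial in $|\beta|$, placing the whole procedure in $\NEXPTIME$ (resp.\ $\NDEXPTIME$). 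I expect the main obstacle to be the size bookkeeping of the previous paragraph: one must check carefully that the number of quantifiers is preserved by unfolding, so that the quantifier expansion multiplies only polynomially many exponential factors — it is this interplay that prevents the expansion from collapsing the unary case into a double exponential.
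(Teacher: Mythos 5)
Your proposal is correct and follows essentially the same route as the paper: hardness via Lemma~\ref{Lem:ReduceDomino} instantiated with the flat iterative term of Lemma~\ref{Lem:SuccinctnessIteration}, and membership by rerunning the machinery of Theorem~\ref{Thm:ModelCheckingIteration} with the final deterministic truth test replaced by the $\NP$ test for satisfiability of variable-free $\Delta_0$-formulas. Your packaging of the upper bound as a single exponential-size quantifier-free formula reduced to one propositional SAT instance is, if anything, a slightly cleaner way to handle the joint satisfiability of the instances arising from universal quantifiers than the paper's ``exponentially many satisfiability tests'' phrasing, but it is the same argument.
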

\begin{proof}
Hardness follows from Lemma \ref{Lem:ReduceDomino} and the construction from the proof of Lemma \ref{Lem:SuccinctnessIteration}, where it is shown that for any $n\geqslant 0$ and $k = 1$ or $k=2$, respectively, a $\EXP{k}{n}$-list is expressible by a flat iterative term, in which the number of iterations is given in unary/binary. The upper complexity bound is shown by a repetition of the proof of Theorem \ref{Thm:ModelCheckingIteration} and by using the fact that the set of satisfiable variable-free $\Delta_0$-formulas is in $\NP$. \QED
\end{proof}


\begin{theorem}[Complexity of Satisfiability for Flat $\Delta_0$+Rec]\label{Thm:SatisfiabilityRestrictedRec}
The set of satisfiable flat $\Delta_0$+Rec formulas, which contain at most $k\geqslant 1$ recursive terms, is $\NmanyEXPTIME{k}$-complete.
\end{theorem}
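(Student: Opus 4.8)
The plan is to prove the two directions separately, reusing the machinery already set up for iterative terms. Completeness will follow from a hardness reduction based on Lemma~\ref{Lem:ReduceDomino} together with the succinct list term of Lemma~\ref{Lem:SuccinctnessRec}, and from an $\NmanyEXPTIME{k}$ membership argument that mirrors the proofs of Theorems~\ref{Thm:ModelCheckingIteration} and~\ref{Thm:SatisfiabilityIteration}, with the iteration-size estimate replaced by a recursion-size estimate via the Unfolding Lemma~\ref{Lem:Unfolding}.

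For \emph{hardness}, the plan is to apply Lemma~\ref{Lem:ReduceDomino}, whose hypothesis asks only for a $\EXP{k}{n}$-list expressible by an $L$-term of size polynomial in $k$ and $n$ (no term for $\times$ or $\circ^{\EXP{k}{n}}$ is needed, and the theory $\bt$ constructed there uses, besides the predicates $T_i$, only the function $tail$ and the term $\explst$ itself). Lemma~\ref{Lem:SuccinctnessRec} supplies $\explst_k$: it is flat (in each of its nested recursions $f=\langle nil\rangle$ and $h=conc(g(\alpha),g(\alpha))$ are $\Delta_0$-terms), it has size linear in $k$ and $n$, and its recursive subterms are exactly $\explst_1,\ldots,\explst_k$. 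Hence, however many times $\explst_k$ occurs in $\bt$, the recursive terms of $\bt$ are just these $k$ nested terms, so $\bt$ is a flat $\Delta_0$+Rec formula with at most $k$ recursive terms, and Lemma~\ref{Lem:ReduceDomino} yields $\NmanyEXPTIME{k}$-hardness. The one delicate point is the counting convention: I will make explicit that the nesting of $\explst_{k-1}$ inside $\explst_k$, and the repeated use of $\explst_k$ as the bound of the two independent coordinate variables in $\bt$, contribute the \emph{same} $k$ recursive subterms, so the grid encoding (which necessarily binds two coordinates by $\explst_k$) does not push the count above $k$.

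For the \emph{upper bound}, given a flat $\Delta_0$+Rec formula $\varphi$ with at most $k$ recursive terms, I would process the bounded quantifiers outermost-first as in Theorem~\ref{Thm:ModelCheckingIteration}. Each quantifier bound is, by definition, variable-free and hence explicit; being flat and of rank at most $k$, it is replaced by the Unfolding Lemma~\ref{Lem:Unfolding} with an equivalent constant list of length at most $\EXP{k}{p(|\varphi|)}$, so each quantifier ranges over at most $\EXP{k}{p(|\varphi|)}$ values of that size. Expanding the at most $|\varphi|$ quantifiers ($\forall$ into conjunctions, $\exists$ into disjunctions) and substituting the chosen constant lists for all variables produces a quantifier-free, variable-free formula; replacing each recursive term by its directly computed value, each ground list atom by its truth value, and each ground predicate atom by a boolean variable yields a propositional formula satisfiable iff $\varphi$ is. Since satisfiability of variable-free $\Delta_0$-formulas is in $\NP$ and the propositional formula has size at most $\EXP{k}{p'(|\varphi|)}$, the whole procedure lies in $\NmanyEXPTIME{k}$, matching the lower bound.

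The main obstacle is the size accounting in the upper bound. Unlike for iteration, a recursive term occurring in the matrix may be \emph{non-explicit}: its recursion bound can contain a quantified variable whose value, after substitution, is itself a long list of length up to $\EXP{k}{p}$ produced by other recursive terms. Applying the Unfolding Lemma naively to such a term after substitution would re-exponentiate this already large size and give a tower of height roughly $2k$. The key claim I must establish is that the number of recursive terms bounds the \emph{height of the tower of exponentials}, not merely the rank: along each maximal chain of recursive terms — a variable-free bound term feeds a quantified variable, which in turn seeds a nest of matrix recursions — the terms occurring in the bound part and in the matrix part are syntactically distinct (the former are variable-free, the latter contain the seeding variable), and successive terms of a nest are proper subterms of one another, so any such chain has length at most $k$. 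Consequently every list value arising during evaluation (values of bound terms, values substituted for variables, and values of matrix subterms after substitution) has length at most $\EXP{k}{p(|\varphi|)}$, provided these values are computed directly rather than by unfolding substituted terms. Verifying this chain-length bound, and checking that it is unaffected by the sharing of recursive subterms, is the technical heart of the proof.
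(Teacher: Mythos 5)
Your proposal is correct and follows essentially the same route as the paper: hardness via Lemma~\ref{Lem:ReduceDomino} instantiated with the flat term $\explst_k$ from Lemma~\ref{Lem:SuccinctnessRec}, and membership via outermost-first elimination of bounded quantifiers combined with the Unfolding Lemma, where the crucial accounting point is that the recursive terms consumed in the quantifier bounds and those remaining in the matrix together number at most $k$, so the tower height stays at $k$ rather than $2k$. The paper expresses this same point by splitting the budget as $m=\max_i\rank(t_i)$ for the prefix and $k-m$ for the matrix in its inductive claim $(*)$; your chain-length formulation, and your explicit remark that repeated occurrences of the shared subterms of $\explst_k$ in the domino theory must not inflate the count above $k$, are if anything slightly more careful than the paper's own treatment.
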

\begin{proof}
Hardness follows from Lemma \ref{Lem:ReduceDomino} and the construction from the proof of Lemma \ref{Lem:SuccinctnessRec}, which shows that for any $k\geqslant 1$ and $n\geqslant 0$, a $\EXP{k}{n}$-list is expressible by a flat term, which contains $k$ recursive terms. The proof for the upper bound is similar to the proof of Theorem \ref{Thm:ModelCheckingIteration}.

Let $\varphi$ be a flat $\Delta_0$+Rec formula of the form $$\Game_1 x_1 \propto_1 t_1 \ \ldots \ \Game_n x_n \propto_n t_n \ \ \psi(x_1\lldots x_n)$$ where $n\geqslant 0$, $\Game_i\in\{\exists, \forall\}$, and $\propto_i \in\{\in, \sqsubseteq\}$, for all $i=1\lldots n$. 
Let $T$ denote the total size of the terms $t_1\lldots t_n$.  We will show by induction on the complexity of $\varphi$ that satisfiability of $\varphi$ can be decided by at most $\EXP{m}{p(|T|)}$-many tests for satisfiability of formulas $\psi'$ obtained from $\psi$ by substitutions of $x_1\lldots x_n$ with a vector of constant lists, each of size bounded by $\EXP{m}{p(T)}$, where $p$ is a polynomial and $m=max(\rank(t_1)\lldots\rank(t_n))$. We refer to this claim further as $(*)$. It yields the statement of the theorem due to the following observation. 

Since $m$ is bounded by the number of recursive terms in the quantifier prefix of $\varphi$, the number of recursive terms in $\psi'$ (and hence, their rank) is less or equal than $k-m$. Then it follows from the proof of Lemma \ref{Lem:Unfolding} that satisfiability of $\varphi$ can be decided by at most $\EXP{m}{p(T)}$-many tests for satisfiability of $\Delta_0$-formulas, each of which is either $\psi'$ as above (if $\psi$ does not contain non-standard terms) or obtained from $\psi'$ as a value of a $\manyEXPTIME{k}$-computable function and has size bounded by $\EXP{k}{r(|\varphi|)}$, for some polynomials $p, r$. Since $m\leqslant k$, $T\leqslant |\varphi|$,  and the set of satisfiable variable-free $\Delta_0$-formulas is in $\NP$, we obtain the statement of the theorem.

Let us now show that $(*)$ holds. If $\varphi$ is quantifier-free, there is nothing to prove. Now assume that $\varphi$ has the form $\Game x \propto t \ \theta(x)$, where $\Game\in\{\exists, \forall\}$ and $\propto \in\{\in, \sqsubseteq\}$. It is equivalent to the formula $\varphi' = \Game x \propto t_0 \ \theta(x)$, where $t_0$ is a constant list term equivalent to $t$. It follows from the proof of Lemma \ref{Lem:Unfolding} that $t_0$ can be obtained as a value of a $\manyEXPTIME{m}$-computable function, where $m$ is the maximal rank of terms $t_i$, $i=1\lldots n$, in the quantifier prefix of $\varphi$. In particular, the number and the size of every list $a\propto t_0$ is bounded by $\EXP{m}{p(T)}$, for a polynomial $p$. Then satisfiability of $\varphi$ can be decided by $\EXP{m}{p(T)}$-many tests of satisfiability of $\theta(a)$, one for each $a\propto t_0$, and thus, by applying the induction assumption to the formulas $\theta(a)$, we obtain the claim. \QED
\end{proof}

\section{Conclusions}
We have shown that looping terms can succinctly represent exponentially long lists and can express Cartesian concatenation of lists, which may be the source of the increased computational complexity. For the latter operation, nested iteration over lists is required. If the number of iterations is bounded by some number $k$, then it is possible to implement Cartesian concatenation via iterative terms only for lists of $k$-bounded length. Thus, there remains a certain gap in understanding the expressiveness of iterative terms. On one hand, the can succinctly represent exponentially long lists (even when terms are flat and the number of iterations is given in unary), on the other hand, they allow for expressing Cartesian concatenation of lists only of polynomially bounded length. We leave it open whether the lower bound on the complexity of model checking for flat $\Delta_0$-formulas with iterative terms matches the upper bound shown in this paper. We have proved tight complexity bounds for satisfiability, which hint to the natural connection with model checking in terms of complexity. If model checking is complete in some complexity class (e.g., $\EXPTIME$), then typically satisfiability is complete in the non-deterministic variant of this class (i.e., $\NEXPTIME$), and vice versa. In the paper we considered extensions of the language of $\Delta_0$-formulas with bounded search, iterative, and recursive terms as separate languages. In further research, we plan to study the interplay between non-standard terms, when they are used in formulas simultaneously, and to identify syntactic restrictions on the form of terms and formulas which guarantee tractable reasoning.

\end{document}